\providecommand\textapprox{}
\renewcommand{\textapprox}{\raisebox{0.5ex}{\texttildelow}}
\newcolumntype{C}[1]{>{\centering\arraybackslash}p{#1}}
\newcommand{\softmax}{\mathrm{softmax}}
\newcommand{\RRR}{\textnormal{RR}}
\newcommand{\DCG}{\textnormal{DCG}}
\newcommand{\NDCG}{\textnormal{NDCG}}
\newcommand{\IDCG}{\textnormal{IDCG}}
\newcommand{\Lip}{\mathrm{Lip}}
\newcommand{\Fad}{\calF_\textnormal{ad}}
\newcommand{\fad}{f_\textnormal{ad}}
\newcommand{\Lad}{\calL_\textnormal{ad}}
\newcommand{\lad}{\ell_\textnormal{ad}}
\DeclareMathOperator{\indc}{\1}
\DeclareMathOperator{\Wone}{\mathit{W}_1}
\DeclareMathOperator{\Div}{\mathit{D}}
\title{Learning List-Level Domain-Invariant Representations\\ for Ranking}
\author{%
Ruicheng Xian\thanksAAffil{University of Illinois Urbana-Champaign.  \texttt{\{\href{mailto:rxian2@illinois.edu}{rxian2},\href{mailto:hanzhao@illinois.edu}{hanzhao}\}@illinois.edu}.}\thanksANote{Work performed while at Google Research.}\and%
Honglei Zhuang\thanksAAffil{Google Research.  \texttt{\{\href{mailto:hlz@google.com}{hlz},%
\href{mailto:zhenqin@google.com}{zhenqin},%
\href{mailto:ljwinnie@google.com}{ljwinnie},%
\href{mailto:maji@google.com}{maji},%
\href{mailto:kaihuibj@google.com}{kaihuibj},%
\href{mailto:xuanhui@google.com}{xuanhui},%
\href{mailto:bemike@google.com}{bemike}%
\}@google.com}.}\and%
Zhen Qin\footnotemarkAAffil[2]\and%
Hamed Zamani\thanksAAffil{University of Massachusetts Amherst.  \texttt{\href{mailto:zamani@cs.umass.edu}{zamani@cs.umass.edu}}.}\footnotemarkANote[2]\and%
Jing Lu\footnotemarkAAffil[2]\and%
Ji Ma\footnotemarkAAffil[2]\and%
Kai Hui\footnotemarkAAffil[2]\and%
Han Zhao\footnotemarkAAffil[1]\and%
Xuanhui Wang\footnotemarkAAffil[2]\and%
Michael Bendersky\footnotemarkAAffil[2]%
}
\date{}
\begin{document}

\setcounter{footnoteANote}{1}
\maketitle
\setcounter{footnote}{3}

\begin{abstract}
Domain adaptation aims to transfer the knowledge learned on (data-rich) source domains to (low-resource) target domains, and a popular method is invariant representation learning, which matches and aligns the data distributions on the feature space.  Although this method is studied extensively and applied on classification and regression problems, its adoption on ranking problems is sporadic, and the few existing implementations lack theoretical justifications.
This paper revisits invariant representation learning for ranking. Upon reviewing prior work, we found that they implement what we call \textit{item-level alignment}, which aligns the distributions of the items being ranked from all lists in aggregate but ignores their list structure.  However, the list structure should be leveraged, because it is intrinsic to ranking problems where the data and the metrics are defined and computed on lists, not the items by themselves.  
To close this discrepancy, we propose \textit{list-level alignment}---learning domain-invariant representations at the higher level of lists.  The benefits are twofold: it leads to the first domain adaptation generalization bound for ranking, in turn providing theoretical support for the proposed method, and it achieves better empirical transfer performance for unsupervised domain adaptation on ranking tasks, including passage reranking.
\end{abstract}
\section{Introduction}

\textit{Learning to rank} applies machine learning techniques to solve ranking problems that are at the core of many everyday products and applications, including search engines and recommendation systems~\citep{liu2009LearningRankInformation}. 
The availability of ever-increasing amounts of training data and advances in modeling are constantly refreshing the state-of-the-art of many ranking tasks~\citep{nogueira2020DocumentRankingPretrained}.  
Yet, the training of practical and effective ranking models on tasks with little to no annotated data remains a challenge~\citep{thakur2021BEIRHeterogeneousBenchmark}.
A popular transfer learning framework for  this problem is \textit{domain adaptation}~\citep{pan2010SurveyTransferLearning}: given source domains with labeled data that are relevant to the target domain of interest, domain adaptation methods optimize the model on the source domain and make the knowledge transferable by utilizing (unlabeled) target data.

For domain adaptation, there are task-specific and general-purpose (unsupervised) methods.  E.g., for text ranking in information retrieval where we only have access to target documents without any training query or relevance annotation, a method is to use generative language models to synthesize relevant queries for the unannotated documents, then train the ranking model on the synthesized query-document pairs~\citep{ma2021ZeroshotNeuralPassage,sun2021FewShotTextRanking,wang2022GPLGenerativePseudo}.
For general-purpose methods, which are applicable to all kinds of tasks, perhaps the most well-known is \textit{domain-invariant representation learning}~\citep{muandet2013DomainGeneralizationInvariant,ganin2016DomainAdversarialTrainingNeural}.
While it is studied extensively on classification and regression problems~\citep{zhao2018AdversarialMultipleSource,zhao2022ReviewSingleSourceDeep}, applications of invariant representation learning on ranking problems are sporadic~\citep{cohen2018CrossDomainRegularization,tran2019DomainAdaptationEnterprise,xin2022ZeroShotDenseRetrieval}, and the existing implementations lack theoretical justifications, which casts doubt on their applicability to learning to rank.

\begin{figure}[t]
    \centering
    \includegraphics[width=0.9\linewidth]{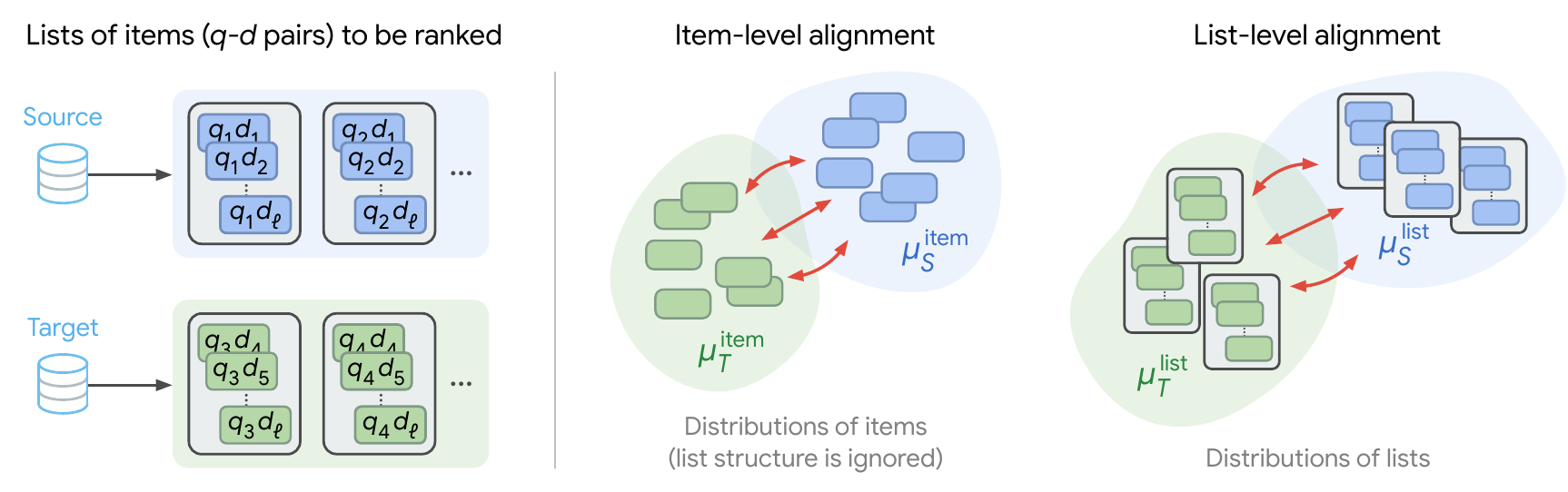}
    \caption{Item-level and list-level alignment on a ranking problem. Each list contains $\ell$ items to be ranked. Whereas item-level alignment aligns the distributions of items but ignores the list structure, list-level alignment preserves this structure and aligns the distributions of lists.}
    \label{fig:list.item}
\end{figure}

In this paper, we revisit the method of learning invariant representations for domain adaptation on ranking problems.  
We found that, from our review of prior work~\citep{cohen2018CrossDomainRegularization,tran2019DomainAdaptationEnterprise,xin2022ZeroShotDenseRetrieval}, existing implementations of the method all perform what we call \textit{item-level alignment} (\cref{fig:list.item,sec:alg.item}), which aligns the distributions of the items being ranked (e.g.,  query-document pairs) aggregated from all lists and discards their \textit{list structure} (as in, e.g., q-d pairs with the same query all belong to the same list).
However, both the data and the metrics for ranking are defined and computed on lists, not the items by themselves, so intuitively the list structure should be leveraged in the feature learning process as it is intrinsic to the problem.
To close this discrepancy, we propose and analyze \textit{list-level alignment}, which aligns the distributions of the lists and preserves their list structure (\cref{sec:alg.list}).

The conceptual leap from item-level to list-level alignment is significant.  First, by analyzing list-level representation invariance, we establish the first domain adaptation generalization bound for ranking (\cref{sec:theory}), in turn providing theoretical support for our proposed method as well as a foundation for future studies.  Second, list-level alignment provides empirical benefits to domain adaptation on ranking problems, which we demonstrate on two ranking tasks, passage reranking (\cref{sec:exp}) and Yahoo!~LETOR web search ranking (\cref{sec:yahoo}), where it outperforms zero-shot learning, item-level alignment, as well as a method specific to text ranking based on query synthesis~\citep{ma2021ZeroshotNeuralPassage}.

\section{Problem Setup}\label{sec:pre}

We define a ranking problem by a joint distribution $\mu$ of length-$\ell$ lists and their nonnegative scores, denoted by $X=(X_1,\cdots,X_\ell)\in\calX$ and $Y=(Y_1,\cdots,Y_\ell)\in\RR_{\geq0}^\ell$.\footnote{Although here the input lists are written to be decomposable into the $\ell$ items they contain, generally and more abstractly, they need not be so.}
For example, in information retrieval, the items in each list are query-document pairs of the same query, and the scores indicate the relevance of the documents to the query.  Note that the lists are \textit{permutation-invariant}, because switching the order of the items (and the corresponding scores) does not change the content of the list; concretely, let $S_\ell$ denotes the set of permutations on $[\ell]\coloneqq\{1,2,\cdots,\ell\}$, then permutation invariance means that $\mu((x_1,\cdots,x_\ell), (y_1,\cdots,y_\ell))=\mu((x_{\pi_1},\cdots,x_{\pi_\ell}), (y_{\pi_1},\cdots,y_{\pi_\ell}))$ for all $(x,y)$ and $\pi\in S_\ell$.
We assume that the scores are a function of the list, $Y=y(X)$, so the problem can be equivalently given by the marginal distribution $\mu^X$ of lists and a scoring function $y:\calX\rightarrow\RR_{\geq0}^\ell$.

Learning to rank aims to learn a \textit{ranker} that takes lists $x$ as input and outputs rank assignments $r\in S_\ell$, where $r_i\in[\ell]$ is the predicted rank of item $i$, such that $r$ recovers the descending order of the scores $y$, i.e., $y_i> y_j\iff r_i < r_j$ for all $i, j$.
The more common setup, which we will adopt, is to train a \textit{scorer}, $f:\calX\rightarrow \RR^\ell$, with the rank assignments obtained by first computing the ranking scores $s=f(x)\in\RR^\ell$, where $s_i$ is the predicted score of item $i$, then taking the descending order of $s$ (alternatively, the rank assignments can be generated probabilistically; see \cref{sec:theory}).

To quantitatively measure the quality of the predicted ranks, we use (listwise) ranking metrics, $u:S_\ell\times \RR^\ell_{\geq0}\rightarrow\RR_{\geq0}$, which computes a utility score via comparing the predicted rank assignments to the ground-truth scores of the list.  Common metrics in information retrieval include reciprocal rank and normalized discounted cumulative gain~\citep{voorhees1999TREC8QuestionAnswering,jarvelin2002CumulatedGainbasedEvaluation}:

\begin{definition}[Reciprocal Rank]  If the ground-truth scores are binary, i.e., $y\in\{0,1\}^\ell$, then the reciprocal rank (RR) of a predicted rank assignments $r\in S_\ell$ is
\begin{equation}
      \RRR(r,y) = \max(\{r_i^{-1}: 1\leq i\leq \ell, y_i = 1\}\cup\{0\}).
\end{equation}
\end{definition}

The average RR of a ranking model $f':\calX\rightarrow S_\ell$ on a dataset, $\E_{ (X,Y)\sim \mu}[\RRR(f'(X),Y)]$, is referred to as the mean reciprocal rank~(MRR).

\begin{definition}[NDCG]
The discounted cumulative gain (DCG) and the normalized DCG (with identity gain function) of a predicted rank assignments $r\in S_\ell$ are
\begin{equation}
  \DCG(r,y) = \sum_{i=1}^\ell\frac{y_{i}}{\log(r_i+1)},\quad\text{and}\quad \NDCG(r,y) = \frac{\DCG(r,y)}{\IDCG(y)},
\end{equation}
where $\IDCG(y)=\max_{r'\in S_\ell}\DCG(r',y)$, called the ideal DCG, is the maximum DCG of a list, which is attained by the descending order of $y$.
\end{definition}

\paragraph{Domain Adaptation.}
In this learning setting, we have a source domain $(\mu_S^X,y_S)$ and a target domain $(\mu_T^X,y_T)$ with different data distributions,\footnote{For adapting from multiple source domains, see~\citep{zhao2018AdversarialMultipleSource}.} and the goal is to learn a good ranking model for the target domain by leveraging all available resources: whereas access to source domain labeled training data is always assumed, we may only be given unlabeled data for the target domain~(this scenario is called \textit{unsupervised} domain adaptation).
A popular method for this adaptation is domain-invariant representation learning, which matches and aligns the source and target domain data distributions on the feature space so that their distributions appear similar to the model.

\section{Learning Domain-Invariant Representations for Ranking}\label{sec:item.list}

We begin by reviewing invariant representation learning for domain adaptation and the optimization technique of adversarial training, then describe and compare two instantiations of this framework to ranking problems in \cref{sec:alg.ranking}---item-level alignment, which is implemented in prior work, and our proposed list-level alignment.

For representation learning, we train composite models of the form $f=h\circ g$, where $g:\calX\rightarrow\calZ$ is a shared feature map and $h$ is a task head on the shared features.  As an example, if the model $f$ is end-to-end implemented  as an $m$-layer multilayer perceptron (MLP), then we could treat the first~$(m-1)$ layers as $g$ and the last as $h$.

And, recall the definitions of Lipschitz continuity and Wasserstein distance~\citep{edwards2011KantorovichRubinsteinTheorem}:

\begin{definition}[Lipschitz Continuity]
  A function $f:\calX\rightarrow\calY$ from metric space $(\calX,d_\calX)$ to $(\calY,d_{\calY})$ is $L$-Lipschitz, denoted by $f\in\Lip(L)$, if $d_{\calY}(f(x),f(x'))\leq L \operatorname{\mathit{d_\calX}}(x,x')$ for all $x,x'\in\calX$.
\end{definition}

\begin{definition}[Wasserstein Distance]\label{def:w1}
  The Wasserstein-$1$ distance between probability measures $p,q$ on metric space $\calX$ is denoted  and given by $W_1(p,q)=\sup_{\psi:\calX\rightarrow\RR, \psi\in\Lip(1)} \int_\calX \psi(x) (p(x)-q(x))\dif x$.
\end{definition}

\subsection{Invariant Representation Learning via Adversarial Training} \label{sec:alg.adv}

Given a source domain $\mu_S$ and a target domain $\mu_T$, invariant representation learning aims to learn a shared feature map $g$ s.t.\ the distributions are aligned on the feature space, i.e., $D(\mu^Z_S,\mu^Z_T)\approx0$, where $D$ is a divergence measure or probability metric, e.g., Wasserstein distance, and the source feature distribution $\mu^Z_S$  (analogously for target $\mu^Z_T$) is defined to be the distribution of $Z=g(X)$, $X\sim\mu_S^X$.
The idea is that a composite model with an aligned feature representation is transferable between domains---this is supported by the following generalization bound on classification problems~\citep{shen2018WassersteinDistanceGuided,ben-david2007AnalysisRepresentationsDomain}. We will establish a similar result for ranking in \cref{sec:theory}.

\begin{theorem}\label{thm:da.cls}
  Let binary classification problems on a source and a target domain be given by joint distributions $\mu_S,\mu_T$ of inputs and labels, $(X,Y)\in\calX\times\{0,1\}$. Define the error rate of a predictor $f:\calX\rightarrow[0,1]$ on $\mu$ by $\calR(f) =  \E_{(X,Y)\sim \mu}[ f(X)\indc[ Y \neq 1] + (1-f(X))\indc[ Y \neq 0]]$,\footnote{This definition assumes that the output class is probabilistic according to $\Pr(\widehat Y=1\mid X=x)=f(x)$.} where $\1[\cdot]$ is the indicator function.
  
  Let $\calH\subset [0,1]^\calZ$ be an $L$-Lipschitz class of prediction heads, and for any $g\in\calG$, define the minimum joint error rate by $\lambda^*_g =  \min_{h'}(\calR_S(h'\circ g) + \calR_T(h'\circ g))$, then for all $h\in \calH$,
  \begin{equation}
    \calR_T(h\circ g)\leq \calR_S(h\circ g) + 2L \Wone(\mu^Z_S, \mu^Z_T)  + \lambda^*_g.
  \end{equation}
\end{theorem}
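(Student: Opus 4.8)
The plan is to adapt the standard Wasserstein-based domain-adaptation argument, reducing everything to a triangle inequality for a suitable ``disagreement'' pseudometric together with one application of Kantorovich--Rubinstein duality. First I would recast the error rate. Writing $\eta_\mu(x)=\Pr(Y=1\mid X=x)$ for the labeling function of a domain $\mu$, and defining for soft predictors $f_1,f_2:\calX\to[0,1]$
\[
  \calR_\mu(f_1,f_2)=\E_{X\sim\mu^X}\bigl[\,f_1(X)(1-f_2(X))+(1-f_1(X))f_2(X)\,\bigr],
\]
one reads off from the definition in the statement that $\calR_\mu(f)=\calR_\mu(f,\eta_\mu)$. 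The integrand $d(a,b)=a(1-b)+(1-a)b$ is exactly the probability that two independent Bernoulli draws with means $a$ and $b$ disagree, so a union bound over the disagreement events of three independent Bernoullis gives $d(a,c)\le d(a,b)+d(b,c)$ pointwise; integrating yields the triangle inequality $\calR_\mu(f_1,f_3)\le\calR_\mu(f_1,f_2)+\calR_\mu(f_2,f_3)$ for every domain.

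Next, fix $g$ and let $h^\star$ attain the minimum defining $\lambda^*_g$, i.e.\ $\calR_S(h^\star\circ g)+\calR_T(h^\star\circ g)=\lambda^*_g$. Applying the triangle inequality on the target domain,
\[
  \calR_T(h\circ g)=\calR_T(h\circ g,\eta_T)\le\calR_T(h\circ g,h^\star\circ g)+\calR_T(h^\star\circ g),
\]
so it remains to bound $\calR_T(h\circ g,h^\star\circ g)$ by its source counterpart plus the Wasserstein term. Set $\psi(z)=d(h(z),h^\star(z))$, so that $\calR_\mu(h\circ g,h^\star\circ g)=\E_{Z\sim\mu^Z}[\psi(Z)]$. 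The key observation is that $\psi$ is $2L$-Lipschitz on $\calZ$: since $d(a,b)=a+b-2ab$ has both partial derivatives bounded by $1$ in magnitude on $[0,1]^2$, one has $|d(a,b)-d(a',b')|\le|a-a'|+|b-b'|$, and composing with the $L$-Lipschitz maps $h,h^\star$ contributes a factor $L$ in each argument, for $2L$ in total. Hence $\psi/(2L)\in\Lip(1)$, and Definition~\ref{def:w1} gives
\[
  \bigl|\calR_T(h\circ g,h^\star\circ g)-\calR_S(h\circ g,h^\star\circ g)\bigr|=\Bigl|\int_\calZ\psi(z)\,(\mu^Z_T(z)-\mu^Z_S(z))\,\dif z\Bigr|\le 2L\,\Wone(\mu^Z_S,\mu^Z_T).
\]

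Finally I would bound the source cross term by the triangle inequality once more, $\calR_S(h\circ g,h^\star\circ g)\le\calR_S(h\circ g,\eta_S)+\calR_S(\eta_S,h^\star\circ g)=\calR_S(h\circ g)+\calR_S(h^\star\circ g)$, and chain the three displays to obtain $\calR_T(h\circ g)\le\calR_S(h\circ g)+\calR_S(h^\star\circ g)+2L\,\Wone(\mu^Z_S,\mu^Z_T)+\calR_T(h^\star\circ g)$, where the last two risk terms sum to $\lambda^*_g$ by the choice of $h^\star$. I expect the only delicate point to be the Lipschitz bookkeeping in the third step: one must be careful that the metric on $\calZ$ used for the $2L$-Lipschitz claim is the same one implicitly underlying $\Wone$, and that the constant is genuinely $2L$ and not larger. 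A secondary subtlety, flagged by the footnote in the statement, is that predictions are probabilistic, so ``disagreement'' must be handled via the coupling/union-bound argument above rather than as a literal $|f_1-f_2|$; everything else is a mechanical assembly of triangle inequalities and the dual definition of $\Wone$.
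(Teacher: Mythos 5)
Your proof is correct and follows essentially the same route as the paper's: the standard $\lambda^*$-decomposition into a source term, a cross-domain discrepancy term handled by Kantorovich--Rubinstein duality with a $2L$-Lipschitz witness, and the joint-risk term. The only (cosmetic) difference is your choice of intermediate discrepancy --- the Bernoulli disagreement $d(a,b)=a+b-2ab$ with its coupling-based triangle inequality, where the paper instead uses $|f(x)-f'(x)|$ and proves the two one-sided comparisons $\calR(f)-\calR(f')\le\E|f-f'|$ and $\E|f-f'|\le\calR(f)+\calR(f')$ directly; both witnesses are $2L$-Lipschitz and yield the identical constant.
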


This result bounds the target domain risk of the composite model $h\circ g$ by its source task risk, plus the divergence of the feature distributions and the optimal joint risk. It therefore suggests that a good model for the target could be obtained by learning an informative domain-invariant feature map $g$ (s.t.\ $\lambda^*_g$ and $W_1(\mu^Z_S, \mu^Z_T)\approx 0$) and training $h$ to minimize the source task risk $\calR_S$.

Invariant representation learning algorithms capture the aforementioned objectives with the joint training objective of $\min_{h,g}\rbr{ \calL_S(h\circ g) + \lambda \Div(\mu^Z_S,\mu^Z_T)}$~\citep{long2015LearningtransferableFeatures,ganin2016DomainAdversarialTrainingNeural,courty2017JointDistributionOptimal}, where $\calL_S$ is the source task loss, and $\lambda>0$ is a hyperparameter that controls the strength of invariant representation learning.  This objective learns domain-invariant features by optimizing $g$ to minimize the distributional discrepancy, and minimizing source risk at the same time s.t.\ the learned features are useful for the task.  Since it does not require labeled data but only unlabeled ones for estimating $W_1(\mu^Z_S, \mu^Z_T)$, it is applicable to unsupervised domain adaptation. If labeled target data are available, they can be incorporated by, e.g., adding a target loss $\calL_T$, which will help keep $\lambda_g^*$ low.

\paragraph{Adversarial Training.}
When the divergence measure $D$ is simple and admits closed-form expressions, e.g., maximum-mean discrepancy~\citep{gretton2012KernelTwoSampleTest}, the objective above can be minimized directly~\citep{long2015LearningtransferableFeatures}, but they are typically too weak at modeling complex distributions that arise in real-world problems.  To achieve feature alignment under stronger measures, e.g., JS divergence or Wasserstein distance, a well-known approach is adversarial training~\citep{goodfellow2014GenerativeAdversarialNets,ganin2016DomainAdversarialTrainingNeural,arjovsky2017WassersteinGenerativeAdversarial}, which reformulates the above objective into
\begin{align}
  \calL_\textnormal{joint}(h,g) = \min_{h\in\calH,g\in\calG}\rbr*{ \calL_S(h\circ g) -  \lambda \min_{\fad\in\Fad} \Lad(g, \fad) },
\end{align}
and optimizes it using gradient descent-ascent with a gradient reversal layer added on top of $g$ in the adversarial component.

The adversarial component, $-\min_{\fad} \Lad(g, \fad)$, can be shown to upper bound the divergence between $\mu^Z_S,\mu^Z_T$. It involves an adversary $\fad:\calZ\rightarrow\RR$ (parameterized by neural networks) and an adversarial loss function $\lad:\RR\times\{0,1\}\rightarrow\RR$, whose inputs are the output of $\fad$ and the domain identity $a$~(we set target domain to $a=1$). The adversarial loss is computed over both domains:
\begin{align}
      \Lad(g, \fad) \coloneqq  \E_{X\sim \mu_S^X}[ \lad(\fad\circ g(X) , 0 )  ]  + \E_{X\sim \mu_T^X}[ \lad(\fad\circ g(X) , 1 )  ]. \label{eq:Lad}
\end{align}
If we choose the 0-1 loss, $\lad(\hat a, a)= (1-a) \indc[\hat a\geq0]+a \indc[\hat a<0]$, then the adversarial component upper bounds $W_1(\mu^Z_S,\mu^Z_T)$ (see \cref{prop:w1.01loss}); here, $\fad$ acts as a \textit{domain discriminator} for distinguishing the domain identity, and $\Lad$ is the balanced classification error of $\fad$. For training, the 0-1 loss is replaced by a surrogate loss, e.g., the logistic loss~\citep{goodfellow2014GenerativeAdversarialNets}, 
\begin{equation}
  \lad(\hat a,a) = \log(1+e^{(1-2a)\hat a}).
\end{equation}

\subsection{Invariant Representation Learning for Ranking}\label{sec:alg.ranking}

This section describes and compares two instantiations of the invariant representation learning framework above for ranking: item-level alignment, and our proposed list-level alignment.  The key difference between them is the choice of $\mu^Z$ whose divergence $W_1(\mu^Z_S,\mu^Z_T)$ is to be minimized.

\paragraph{Model Setup.} We consider a composite scoring model $f=h\circ g$ where the feature map~$g$ is s.t.\ given an input list $x=(x_1,\cdots,x_\ell)$, it outputs a list of $k$-dimensional feature vectors, $z=g(x)=(v_1,\cdots,v_\ell)\in\RR^{\ell\times k}$, where $v_i\in\RR^k$ corresponds to item $i$.  Ranking scores are then obtained by, e.g., projecting each $v_i$ to $\RR$ with a (shared) linear layer.
This setup, which we will adopt in our experiments, is common with listwise ranking models and in many ranking systems~\citep{cao2007LearningRankPairwise}: in neural text ranking, each feature vector can be the embedding of the input text computed by a language model~\citep{nogueira2020PassageRerankingBERT,guo2020DeepLookNeural,zhuang2022RankT5FineTuningT5}.

\subsubsection{Item-Level Alignment}\label{sec:alg.item}

In item-level alignment (ItemDA), the distributions of the $k$-dimensional feature vectors from all lists in aggregate are aligned~\citep{cohen2018CrossDomainRegularization,tran2019DomainAdaptationEnterprise,xin2022ZeroShotDenseRetrieval}, i.e., $\mu^{Z,\textnormal{item}}_S\approx \mu^{Z,\textnormal{item}}_T$, where
\begin{equation}
  \mu^{Z,\textnormal{item}}(v)\coloneqq \Pr_{Z\sim\mu^Z}(v\in Z) = \Pr_{X\sim\mu^X}(v\in g(X)), \quad\supp(\mu^{Z,\textnormal{item}}) \subseteq \RR^k.
\end{equation}
In other words (when both domain have the same number of lists), it finds a matching between the items of the source and target domains that pairs each target item with a source one (\cref{fig:list.vs.item}).  Note that the list structure is not retained in this definition, because one cannot necessarily tell whether two items drawn from the bag of feature vectors, $v,v'\sim\mu^{Z,\textnormal{item}}$, belong to the same list.

To implement item-level alignment, the discriminator $\fad$ (usually an MLP) is set up to take individual feature vectors $v\in\RR^k$ as input.  In each forward-pass, the feature map $g$ computes a batch of feature vector lists, $\{z_i\}_{i\in[b]}=\{(v_{i1},\cdots,v_{i\ell})\}_{i\in [b]}$, then the discriminator takes the items batched from all lists (this step discards the list structure), $\{v_{ij}\}_{i\in[b],j\in[\ell]}$, and predicts their domain identities.

Item-level alignment is identical to invariant representation learning implementations for classification and regression, e.g., DANN~\citep{ganin2016DomainAdversarialTrainingNeural}, which also operate on ``items'', since neither the data nor the metrics in those problem settings have any explicit structural assumptions. However, ranking problems are inherently defined with list structures:  the inputs to the model are organized by lists, and ranking metrics are evaluated on lists rather than the items by themselves.  This discrepancy casts doubt on the applicability of item-level alignment to ranking problems, and moreover, it is unclear whether domain adaptation for ranking via item-level alignment is justified from a theoretical perspective.

\begin{figure}[t]
    \centering
    \includegraphics[width=0.81\linewidth]{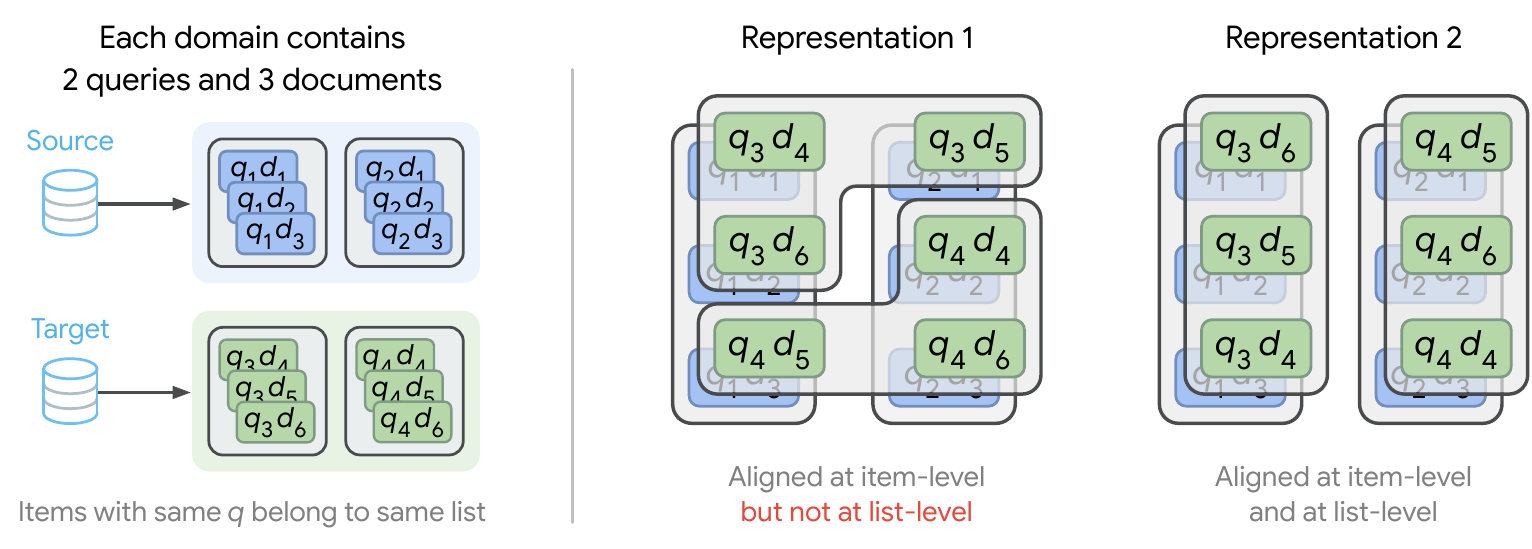}
    \caption{List-level alignment is a stronger requirement than item-level alignment, which, in addition to pairing target domain items to source items, it also pairs target lists to source lists.}
    \label{fig:list.vs.item}
\end{figure}

\subsubsection{List-Level Alignment}\label{sec:alg.list}

Closing the discrepancy discussed above, we propose list-level alignment (ListDA), which aligns the distributions of the lists of feature vectors on the $\RR^{\ell\times k}$ space, i.e., $\mu^{Z,\textnormal{list}}_S\approx \mu^{Z,\textnormal{list}}_T$, where
\begin{equation}
  \mu^{Z,\textnormal{list}}(z)\coloneqq \Pr_{Z\sim\mu^Z}(z) = \Pr_{X\sim\mu^X}(z=g(X)), \quad\supp(\mu^{Z,\textnormal{list}}) \subseteq \RR^{\ell\times k}.
\end{equation}
This means finding a matching that not only pairs all source and target items, but also pairs items from the same target list to items from the same source list. Therefore, list-level alignment is a stronger requirement than item-level alignment, $\mu^{Z,\textnormal{list}}_S= \mu^{Z,\textnormal{list}}_T\implies \mu^{Z,\textnormal{item}}_S= \mu^{Z,\textnormal{item}}_T$, while the converse is generally not true (see \cref{fig:list.vs.item} for a picture, or \cref{sec:example})!

To implement list-level alignment, the discriminator has to make predictions on lists, $z\in \RR^{\ell\times k}$, i.e., in each forward-pass it would be fed a batch of feature vector lists, $\{z_i\}_{i\in[b]}=\{(v_{i1},\cdots,v_{i\ell})\}_{i\in [b]}$, and output $b$ predictions.  For this, a possible design choice is to flatten each feature vector list into a long $(\ell k)$-dimensional vector and use an MLP as $\fad$.
But recall from \cref{sec:pre} that the input lists are permutation-invariant, and so are the feature vector lists.  Since the above setup does not incorporate this property and an exponential amount of samples is required to see all possible permutations, this would be inefficient to optimize; yet, the optimality of $\fad$ is essential to the success of adversarial training.
So instead, we use transformers with mean-pooling less positional encoding as the discriminator~\citep{vaswani2017AttentionAllYou}, which is permutation-invariant to the input.

Compared to item-level alignment, our list-level alignment is supported by a domain adaptation generalization bound (\cref{thm:da}, established in the next section), and achieves better transfer performance in our \cref{sec:exp,sec:yahoo} evaluations.  These results suggest that the list structure is essential for an effective domain adaptation on ranking problems.

\section{Generalization Bound for Ranking with List-Level Alignment}\label{sec:theory}

Based on the list-level alignment proposed above, we establish a domain adaptation generalization bound for ranking by analyzing list-level representation invariance.  The discussions in this section consider learning a composite scoring model, $f=h\circ g:\calX\rightarrow\RR^\ell$, but need not assume that the list feature representation $z=g(x)$ can be decomposed into item-level feature representations as in the model setup of \cref{sec:alg.list}, only that it resides in a metric space $\calZ$.

Our result can be viewed as an instantiation of the framework established originally for classification by \citet{ben-david2007AnalysisRepresentationsDomain}. But since ranking is a very different task with its unique technical challenges, our analysis differs from those in prior work~\citep{shen2018WassersteinDistanceGuided}. We will introduce appropriate assumptions for handling these difficulties leading up to the main \cref{thm:da}, which the readers may skip to.

The first hurdle is the discontinuity of the sorting operation. So rather than taking the descending order of the predicted scores to obtain the rank assignments, we generate them probabilistically using a \textit{Plackett-Luce model}~\citep{plackett1975AnalysisPermutations,luce1959IndividualChoiceBehavior} with the exponentiated scores as its parameters~\citep{cao2007LearningRankPairwise,guiver2009BayesianInferencePlackettLuce}.  This makes the computation of the utility (Lipschitz) continuous w.r.t.~the raw output scores of the model.

\begin{definition}[Plackett-Luce Model]\label{def:pl}
  A Plackett-Luce (P-L) model with parameters $w\in\RR_{> 0}^\ell$ specifies a distribution over $S_\ell$, whose probability mass function $p_w$ is
\begin{equation}
      p_w(r) =  \prod_{i=1}^\ell \frac{w_{I(r)_i}}{\sum_{j=i}^\ell w_{I(r)_j}}, \quad \forall r\in S_\ell,
\end{equation}
where $I(r)_i$ is the index of the item with rank $i$, $r_{I(r)_i}=i$. 
\end{definition}

\begin{assumption}\label{ass:pl}
  Given predicted scores $s= f(x)\in \RR^\ell$, we generate the rank assignments probabilistic from a P-L model by $R\sim p_{\exp(s)}\in S_\ell$, where $\exp$ is taken coordinate-wise.
\end{assumption}

Under this assumption, the utility of a scorer $f$ on $(\mu^X,y)$ w.r.t.\ ranking metric $u:S_\ell\times \RR^\ell_{\geq0}\rightarrow\RR_{\geq0}$ is computed by (overloading $u$ for future brevity)
\begin{equation}
  \E_{\mu} [u(f)]\coloneqq  \E_{X\sim \mu^X}\E_{R\sim p_{\exp(f(X))}}[u(R, y(X))].
\end{equation}
We define the risk (or suboptimality) of $f$ as the difference between its utility to the maximum-attainable utility on the problem (e.g., the maximum is $1$ when $u=\NDCG$):
\begin{equation}
    \calR(f) =  \E_{ X\sim \mu^X} \sbr*{\max_{r\in S_\ell}u(r, y(X))}  - \E_{\mu}\sbr{u(f)}.
\end{equation}

The next technical challenge is that unlike in classification where the perfect classifier is unique (i.e., achieving zero error), the perfect ranker is generally not: e.g., both rank assignments $(1,2,3)$ and $(2,1,3)$ achieve maximum utility (i.e., $\calR=0$) on a list with ground-truth scores $(1,1,0)$.  Prior analyses of domain adaptation leverage this uniqueness~\citep{ben-david2007AnalysisRepresentationsDomain,shen2018WassersteinDistanceGuided}; instead, ours does not rely on uniqueness (else the bound would be loose) with the following Lipschitz assumptions:

\begin{assumption}\label{ass:metric}
  The ranking metric $u:S_\ell\times \RR^\ell_{\geq0}\rightarrow\RR_{\geq0}$  is upper bounded by $B$, and is $L_u$-Lipschitz w.r.t.\ the ground-truth scores $y$ in the second argument (in Euclidean distance).
\end{assumption}

\begin{assumption}\label{ass:gt}
  The input lists $X$ reside in a metric space $\calX$, and the ground-truth scoring function $y:\calX\rightarrow\RR^\ell_{\geq0}$ is $L_y$-Lipschitz~(in Euclidean distance on the output space).
\end{assumption}

We will show that \cref{ass:metric} is satisfied by both RR and NDCG.  \Cref{ass:gt} says that similar lists (i.e., close in $\calX$) should have similar ground-truth scores, and is satisfied, e.g., when $\calX$ is finite; such is the case with text data, which are one-hot encoded after tokenization.

\begin{assumption}\label{ass:scorer}
  The list features $Z$ reside in a metric space $\calZ$, and the class $\calH$ of scoring functions, $h:\calZ\rightarrow\RR^\ell$, is $L_h$-Lipschitz~(in Euclidean distance on the output space).
\end{assumption}

\begin{assumption}\label{ass:feature.map}
  The class $\calG$ of feature maps, $g:\calX\rightarrow\calZ$, satisfies that $\forall g\in\calG$, the restrictions of $g$ to the supports of $\mu_S^X$ and $\mu_T^X$, $g|_{\supp(\mu_S^X)},g|_{\supp(\mu_T^X)}$ respectively, are both invertible with $L_g$-Lipschitz inverses.
\end{assumption}

\Cref{ass:scorer} is standard in generalization and complexity analyses, and could be enforced, e.g., with $L^2$-regularization~\citep{anthony1999NeuralNetworkLearning,bartlett2017SpectrallynormalizedMarginBounds}.  The last assumption is technical, saying that the inputs can be recovered from their feature representations by a Lipschitz inverse $g^{-1}$. This means that the feature map $g$ should retain as much information from the inputs (on each domain), which is a desideratum of representation learning.  Note that this assumption does not conflict with the goal of invariant representation learning: there is always a sufficiently expressive $\calG$ s.t.\ $\exists g\in\calG$ satisfying the assumption and $\mu^{Z,\textnormal{list}}_S=\mu^{Z,\textnormal{list}}_T$.

We are now ready to state our domain adaptation generalization bound for ranking:

\begin{theorem}\label{thm:da} 
Under \cref{ass:pl,ass:metric,ass:gt,ass:scorer,ass:feature.map}, for any $g\in\calG$, define the minimum joint risk by $\lambda_g^*=  \min_{h'}(\calR_S(h'\circ g) + \calR_T(h'\circ g))$, then for all $h\in\calH$, 
\begin{equation}
    \calR_T(h\circ g) \leq \calR_S(h\circ g)    + 4(L_uL_yL_g + B\ell L_h) \Wone(\mu^{Z,\textnormal{list}}_S,\mu^{Z,\textnormal{list}}_T) + \lambda_g^* ,
\end{equation}
  where $\mu^{Z,\textnormal{list}}$ is the marginal distribution of the list features $Z=g(X)$, $\mu^{Z,\textnormal{list}}(z)\coloneqq \mu^X(g^{-1}(z))$.
\end{theorem}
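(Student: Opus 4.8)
The plan is to adapt the Ben-David--Shen template behind \cref{thm:da.cls} \citep{ben-david2007AnalysisRepresentationsDomain,shen2018WassersteinDistanceGuided}, replacing every step that relies on the uniqueness of the Bayes-optimal predictor by an argument built from the Lipschitz \cref{ass:metric,ass:gt,ass:scorer} together with the smoothing effect of \cref{ass:pl} (which makes the utility continuous in the raw scores). I write $U_s(y):=\E_{R\sim p_{\exp(s)}}[u(R,y)]$ and work in the feature space: by \cref{ass:feature.map}, $\mu^{Z,\textnormal{list}}_D$ is the pushforward of $\mu^X_D$ under $g$ restricted to $\supp(\mu^X_D)$, the map $\eta_D:=y_D\circ(g|_{\supp(\mu^X_D)})^{-1}$ is $L_yL_g$-Lipschitz on $\supp(\mu^{Z,\textnormal{list}}_D)$ (composing \cref{ass:gt,ass:feature.map}), and $\calR_D(h\circ g)=\E_{Z\sim\mu^{Z,\textnormal{list}}_D}[\rho^h_D(Z)]$ where the per-list regret $\rho^h_D(z):=\max_{r}u(r,\eta_D(z))-U_{h(z)}(\eta_D(z))\ge 0$ is the regret of a Plackett--Luce distribution under the utility $u(\cdot,\eta_D(z))$.

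The first ingredient is a Lipschitz estimate for the Plackett--Luce parametrization: $s\mapsto p_{\exp(s)}$ is $O(\ell)$-Lipschitz from $(\RR^\ell,\|\cdot\|_2)$ into the probability simplex on $S_\ell$ under the $\|\cdot\|_1$ metric, which I would prove by differentiating the log-mass-function of \cref{def:pl} (each of its $\ell$ ``choice'' factors is a softmax, whose Jacobian has bounded operator norm) and summing the $\ell$ contributions. Combined with $u\le B$, $u$ being $L_u$-Lipschitz in $y$ (\cref{ass:metric}), and $h$ being $L_h$-Lipschitz (\cref{ass:scorer}), this yields that $z\mapsto\max_r u(r,\eta_D(z))$ is $L_uL_yL_g$-Lipschitz and that $z\mapsto U_{h'(z)}(\eta_D(z))$ is $(B\ell L_h+L_uL_yL_g)$-Lipschitz on $\supp(\mu^{Z,\textnormal{list}}_D)$ for every $h'\in\calH$ (extended Lipschitz-continuously to all of $\calZ$ if needed).

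Next I would introduce a joint-optimal head $h^{*}\in\calH$ attaining $\lambda_g^*$ (up to $\epsilon\to0$) and the disagreement $d_D(h,h'):=\E_{Z\sim\mu^{Z,\textnormal{list}}_D}|U_{h(Z)}(\eta_D(Z))-U_{h'(Z)}(\eta_D(Z))|$. Two elementary observations drive the reduction: (i) $\calR_T(h\circ g)\le\calR_T(h^{*}\circ g)+d_T(h,h^{*})$, since $\calR_T(h\circ g)-\calR_T(h^{*}\circ g)=\E_{\mu^{Z,\textnormal{list}}_T}[U_{h^{*}(Z)}(\eta_T(Z))-U_{h(Z)}(\eta_T(Z))]$; and (ii) $d_S(h,h^{*})\le\calR_S(h\circ g)+\calR_S(h^{*}\circ g)$, because $|U_{h(z)}(\eta_S(z))-U_{h^{*}(z)}(\eta_S(z))|=|\rho^{h^{*}}_S(z)-\rho^h_S(z)|\le\rho^h_S(z)+\rho^{h^{*}}_S(z)$ by nonnegativity of the regrets --- this last step is exactly where the (absent) uniqueness of the optimal ranker is sidestepped. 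Chaining (i) and (ii) reduces the theorem to a cross-domain discrepancy estimate, roughly $d_T(h,h^{*})-d_S(h,h^{*})=O\big((L_uL_yL_g+B\ell L_h)\,\Wone(\mu^{Z,\textnormal{list}}_S,\mu^{Z,\textnormal{list}}_T)\big)$, with the residual mismatch between $\eta_S$ and $\eta_T$ absorbed, through the optimality of $h^{*}$, into $\lambda_g^*$.

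This discrepancy estimate is the main obstacle. I would bound it along an optimal $\Wone$-coupling $\gamma$ of $\mu^{Z,\textnormal{list}}_S,\mu^{Z,\textnormal{list}}_T$: writing $d_T(h,h^{*})-d_S(h,h^{*})=\E_{(Z_S,Z_T)\sim\gamma}[\phi_T(Z_T)-\phi_S(Z_S)]$ with $\phi_D(z):=|U_{h(z)}(\eta_D(z))-U_{h^{*}(z)}(\eta_D(z))|$, the pieces of $\phi_T(Z_T)-\phi_S(Z_S)$ that see $z$ only through the two heads are controlled by the Lipschitz estimates of the second paragraph, contributing $O(B\ell L_h)\|Z_S-Z_T\|$, and the pieces that see $z$ through the labeling function contribute $O(L_uL_yL_g)\|Z_S-Z_T\|$ once $\eta_S$ and $\eta_T$ are placed on a common ground space. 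The delicate point --- and the reason this differs from classification, where the analogous term vanishes by uniqueness --- is the leftover discrepancy between $\eta_S$ and $\eta_T$ at coupled features: bounding it by $L_u\|\eta_T(Z_T)-\eta_S(Z_S)\|$ is wastefully loose (it ignores, e.g., scale invariances of the metric), so the argument must instead charge this residual to $\lambda_g^*$ using that no single head can keep regret small against two genuinely incompatible labelings at the same feature. Collecting the two contributions and substituting back through (i)--(ii) gives the stated bound. Finally, I would verify \cref{ass:metric} for the two metrics of interest: reciprocal rank is bounded by $1$ and $1$-Lipschitz in the binary labels (distinct binary label vectors are at Euclidean distance $\ge 1$), and $\NDCG$ is bounded by $1$ with $\DCG$ linear in $y$, the $\IDCG$ normalization being handled under a mild lower bound on $\IDCG$ --- both routine.
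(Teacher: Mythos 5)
Your Lipschitz machinery and your reduction steps (i)--(ii) are sound and close to the paper's: the paper likewise shows that $z\mapsto\max_r u(r,y_S\circ g^{-1}(z))$ is $L_uL_yL_g$-Lipschitz and that $z\mapsto \E_{R\sim p_{\exp(h(z))}}[u(R,y_S\circ g^{-1}(z))]$ is $(L_uL_yL_g+2B\ell L_h)$-Lipschitz via the gradient bound $\|\nabla_s\E_{R\sim p_{\exp(s)}}[u(R,y)]\|_1\le 2B\ell$, which is your $O(\ell)$ estimate for the Plackett--Luce map multiplied by $B$. Your decomposition through the joint-optimal head $h^{*}$ and the absolute disagreements $d_D$ is a legitimate variant of what the paper does (the paper telescopes the risks and works with the signed difference $\epsilon_{h,g}-\epsilon_{h',g}$ of per-list regrets, never needing the absolute value, but since $|\cdot|$ is $1$-Lipschitz your route yields the same constant, $2(L_uL_yL_g+2B\ell L_h)\le 4(L_uL_yL_g+B\ell L_h)$).

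The gap is in your final step. Having reduced the theorem to $d_T(h,h^{*})-d_S(h,h^{*})\le 4(L_uL_yL_g+B\ell L_h)\,\Wone(\mu^{Z,\textnormal{list}}_S,\mu^{Z,\textnormal{list}}_T)$, you reject the direct route --- treating $z\mapsto|U_{h(z)}(\eta(z))-U_{h^{*}(z)}(\eta(z))|$ as a single Lipschitz test function and invoking the Kantorovich--Rubinstein characterization of $\Wone$ --- as ``wastefully loose,'' and propose instead to charge the residual mismatch between $\eta_S$ and $\eta_T$ at coupled features to $\lambda_g^*$. That cannot work as stated: $\lambda_g^*$ appears exactly once in the bound and is already fully consumed by your steps (i) and (ii) (it is precisely $\calR_S(h^{*}\circ g)+\calR_T(h^{*}\circ g)$), so charging anything further to it would double-count or change its coefficient; and no mechanism is given by which ``no single head can keep regret small against two incompatible labelings'' produces an inequality with the stated constants. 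The direct route is in fact the intended one and is exactly what the paper does: each domain's risk difference is expressed through its own per-list regret, the difference of regrets is $4(L_uL_yL_g+B\ell L_h)$-Lipschitz in $z$, and the supremum over such test functions of $\E_{\mu^{Z,\textnormal{list}}_T}[q]-\E_{\mu^{Z,\textnormal{list}}_S}[q]$ is the Wasserstein term by Definition~\ref{def:w1}. (The identification of the source and target test functions on the feature space is the same implicit step already present in the classification bound of \cref{thm:da.cls}; you correctly sense this subtlety, but the resolution is to accept the identification --- any genuine incompatibility of the labelings is what keeps $\lambda_g^*$ large --- not to re-derive a separate charge against $\lambda_g^*$.) As written, the crucial inequality of your argument is therefore not established.
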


The key feature of this bound is that it depends on list-level alignment, $W_1(\mu^{Z,\textnormal{list}}_S,\mu^{Z,\textnormal{list}}_T)$, hence it provides theoretical support for the list-level alignment proposed in \cref{sec:alg.list}. It can be instantiated to specific ranking metrics by simply verifying the Lipschitz condition $L_u$ of \cref{ass:metric}:

\begin{corollary}[Bound for MRR]\label{cor:da.rr}
\RRR\ is $1$-Lipschitz in $y$, thereby
\begin{equation}
  \E_{\mu_T}\sbr{\RRR(h\circ g)} \geq \E_{\mu_S}\sbr{\RRR(h\circ g)}  - 4(L_yL_g + \ell L_h)\Wone(\mu^{Z,\textnormal{list}}_S,\mu^{Z,\textnormal{list}}_T)   - \lambda_g^*.
\end{equation}
\end{corollary}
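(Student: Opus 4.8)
The corollary is essentially a specialization of \cref{thm:da}, so the plan is: (1) verify the hypotheses of \cref{ass:metric} for $u=\RRR$ with $L_u=1$ and $B=1$; (2) substitute these constants into \cref{thm:da}; (3) translate the resulting risk inequality into the utility inequality stated. The only step with real content is (1), and within it the point to get right is that, because \RRR\ is defined only for binary ground-truth scores, one should check the Lipschitz condition over the image of the ground-truth scoring function $y:\calX\to\{0,1\}^\ell$ rather than over all of $\RR_{\geq0}^\ell$ — which is all the proof of \cref{thm:da} ever uses, since it only compares $u(r,y(x))$ with $u(r,y(x'))$ for inputs $x,x'$.

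For (1), fix $r\in S_\ell$. First, $0\le\RRR(r,y)\le 1$ for all $y$, since every predicted rank satisfies $r_i\ge 1$ and hence $r_i^{-1}\le 1$; so $B=1$. Second, any two \emph{distinct} points $y\neq y'$ of $\{0,1\}^\ell$ satisfy $\|y-y'\|_2=\sqrt{|\{i\in[\ell]:y_i\ne y_i'\}|}\ge 1$. Combining these, $|\RRR(r,y)-\RRR(r,y')|\le 1\le\|y-y'\|_2$ whenever $y\neq y'$, while the left-hand side is $0$ when $y=y'$; hence $\RRR(r,\cdot)$ is $1$-Lipschitz in Euclidean distance, uniformly in $r$, and the example $\ell=1$, $r=(1)$, $y=(1)$, $y'=(0)$ (where $\RRR$ jumps from $1$ to $0$) shows the constant $1$ cannot be improved. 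This verifies \cref{ass:metric} with $L_u=1$ and $B=1$.

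For (2)--(3), applying \cref{thm:da} with these constants gives, for every $g\in\calG$ and $h\in\calH$,
\begin{equation*}
  \calR_T(h\circ g)\le\calR_S(h\circ g)+4(L_yL_g+\ell L_h)\,\Wone(\mu^{Z,\textnormal{list}}_S,\mu^{Z,\textnormal{list}}_T)+\lambda_g^*.
\end{equation*}
On an MRR task each list contains at least one relevant item, so $\max_{r\in S_\ell}\RRR(r,y(x))=1$ for every $x$ on both domains, whence $\calR(h\circ g)=1-\E_{\mu}[\RRR(h\circ g)]$ for $\mu\in\{\mu_S,\mu_T\}$. Substituting this for $\calR_S,\calR_T$ in the display, cancelling the constant $1$ from both sides, and rearranging yields exactly $\E_{\mu_T}[\RRR(h\circ g)]\ge\E_{\mu_S}[\RRR(h\circ g)]-4(L_yL_g+\ell L_h)\Wone(\mu^{Z,\textnormal{list}}_S,\mu^{Z,\textnormal{list}}_T)-\lambda_g^*$. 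I expect no obstacle in steps (2)--(3); the only mild subtleties are the standard MRR normalization convention used in (3) (every query has a relevant document, so the maximum attainable RR is $1$ on both domains and cancels) and the already-noted care in (1) about the domain on which Lipschitzness must be checked.
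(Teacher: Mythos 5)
Your proof is correct and takes essentially the same route as the paper: the paper's entire argument is that $\RRR\le 1$ uniformly while distinct points of $\{0,1\}^\ell$ are at Euclidean distance at least $1$, so $y\mapsto\RRR(r,y)$ is $1$-Lipschitz on the relevant domain, and the rest is substitution into \cref{thm:da}. Your additional care about restricting the Lipschitz check to the image of $y$ and about the risk-to-utility translation (the maximum attainable RR being $1$ on both domains) only makes explicit steps the paper leaves implicit.
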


\begin{corollary}[Bound for \NDCG]\label{cor:da.ndcg} If $C^{-1}\leq\IDCG\leq C$ for some $C\in(0,\infty)$ on $(\mu^X_S,y_S)$ and $(\mu^X_T,y_T)$ almost surely,\footnote{IDCG is lower bounded, e.g., if every list contains at least one relevant item, and upper bounded when the ground-truth scores are upper bounded.} then \NDCG\ is $\widetilde O(C\sqrt\ell)$-Lipschitz in $y$ almost surely, thereby 
\begin{equation}
  \E_{\mu_T}\sbr{\NDCG(h\circ g)} \geq \E_{\mu_S}\sbr{\NDCG(h\circ g)}  - \widetilde O(C\sqrt{\ell}L_yL_g+\ell L_h)\Wone(\mu^{Z,\textnormal{list}}_S,\mu^{Z,\textnormal{list}}_T)  - \lambda_g^*.
\end{equation}
\end{corollary}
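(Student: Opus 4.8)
The plan is to derive the bound straight from \cref{thm:da} applied with $u=\NDCG$; everything then reduces to verifying \cref{ass:metric} for \NDCG\ with the constants named in \cref{cor:da.ndcg}. First I would record two structural facts. Since $\DCG(r,y)\le\IDCG(y)$ for every $r$, we have $0\le\NDCG\le1$, so the bound constant is $B=1$. And $\max_{r\in S_\ell}\NDCG(r,y)=\IDCG(y)/\IDCG(y)=1$ as soon as $\IDCG(y)>0$, which holds almost surely under $\IDCG\ge C^{-1}$; hence the maximum-utility term in $\calR_\mu$ equals $1$ on both domains and $\calR_\mu(h\circ g)=1-\E_\mu\sbr{\NDCG(h\circ g)}$. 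Consequently, once a Lipschitz constant $L_u=\widetilde O(C\sqrt\ell)$ for \NDCG\ in its second argument is in hand, substituting $B=1$ and this $L_u$ into $\calR_T(h\circ g)\le\calR_S(h\circ g)+4(L_uL_yL_g+B\ell L_h)\Wone(\mu^{Z,\textnormal{list}}_S,\mu^{Z,\textnormal{list}}_T)+\lambda_g^*$ and rearranging gives precisely the claimed inequality, with the factor $4$ and the polylog factors hidden in $L_u$ absorbed into $\widetilde O$.

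The substantive step I would carry out is the Lipschitz estimate. Write $\NDCG(r,y)=\DCG(r,y)/\IDCG(y)$ and put $d=(1/\log2,\dots,1/\log(\ell+1))\in\RR^\ell$. For fixed $r$, $\DCG(r,\cdot)$ is the linear map $y\mapsto\langle a(r),y\rangle$ with $a(r)=(1/\log(r_i+1))_{i=1}^\ell$ a permutation of $d$, hence $\|d\|_2$-Lipschitz; likewise $\IDCG(\cdot)=\max_{\pi\in S_\ell}\langle d_\pi,\cdot\rangle$ is a maximum of such linear maps, so it too is $\|d\|_2$-Lipschitz. Plugging $f=\DCG(r,y)$, $g=\IDCG(y)$ (and primed versions for $y'$) into the elementary quotient bound $|f/g-f'/g'|\le(|f|\,|g-g'|+|g|\,|f-f'|)/(|g|\,|g'|)$, then using $|f|\le|g|$ and $|g|,|g'|\ge C^{-1}$, yields, uniformly over $r\in S_\ell$, $|\NDCG(r,y)-\NDCG(r,y')|\le 2C\|d\|_2\|y-y'\|_2$. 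Since $\|d\|_2^2=\sum_{i=1}^{\ell}1/\log^2(i+1)=\Theta(\ell/\log^2\ell)$ --- even the crude $\|d\|_2\le\sqrt\ell/\log2$ suffices --- this is $L_u=2C\|d\|_2=\widetilde O(C\sqrt\ell)$, establishing the Lipschitz claim; feeding it into \cref{thm:da} as above then yields the bound.

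The main (in fact only real) obstacle is this Lipschitz estimate, and the subtlety to watch is that the feasible set of scores, $\{y:\IDCG(y)\ge C^{-1}\}$, is not convex --- it is a superlevel set of the convex function $\IDCG$ --- so one cannot get the Lipschitz constant by integrating a pointwise gradient bound $\|\nabla_y\NDCG(r,y)\|_2=O(C\|d\|_2)$ along the segment $[y,y']$, as that segment may leave the region where the bound holds; the direct quotient estimate above sidesteps this by evaluating only at the endpoints. (The upper bound $\IDCG\le C$, by contrast, is not needed for the Lipschitz estimate---only the lower bound is---though it is natural given bounded ground-truth scores, cf.\ \cref{ass:gt}.) Everything beyond this is a mechanical substitution into \cref{thm:da}.
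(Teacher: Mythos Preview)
Your proof is correct and proceeds differently from the paper's.  The paper establishes the Lipschitz constant by bounding the partial derivatives $|\partial_{y_k}\NDCG(r,y)|$ pointwise on each piece of the piecewise-differentiable function $y\mapsto\NDCG(r,y)$, obtaining $\|\nabla_y\NDCG(r,y)\|_2\le\sqrt\ell\,(C+C\log^2(\ell+1))$ via the quotient rule, and then invoking the equivalence between bounded gradient and Lipschitzness.  You instead bound the Lipschitz constants of $\DCG(r,\cdot)$ and $\IDCG(\cdot)$ directly (both are $\|d\|_2$-Lipschitz, as a linear map and a max of linear maps respectively) and combine them through the elementary quotient inequality, getting $L_u=2C\|d\|_2$.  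Your route is more elementary, gives a slightly sharper constant (since $\|d\|_2=\Theta(\sqrt\ell/\log\ell)$ versus the paper's $\sqrt\ell\log^2\ell$ factor), and---as you yourself note---cleanly avoids the issue that the domain $\{y:\IDCG(y)\ge C^{-1}\}$ is a superlevel set of a convex function and hence not convex, so the passage from a pointwise gradient bound to a global Lipschitz bound along straight segments is not immediate in the paper's argument.  Your observation that the upper bound $\IDCG\le C$ is unnecessary for the Lipschitz estimate is also correct; only the lower bound enters.
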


\section{Experiments on Passage Reranking}\label{sec:exp}

\begin{figure}[t]
    \centering
    \includegraphics[width=0.9\linewidth]{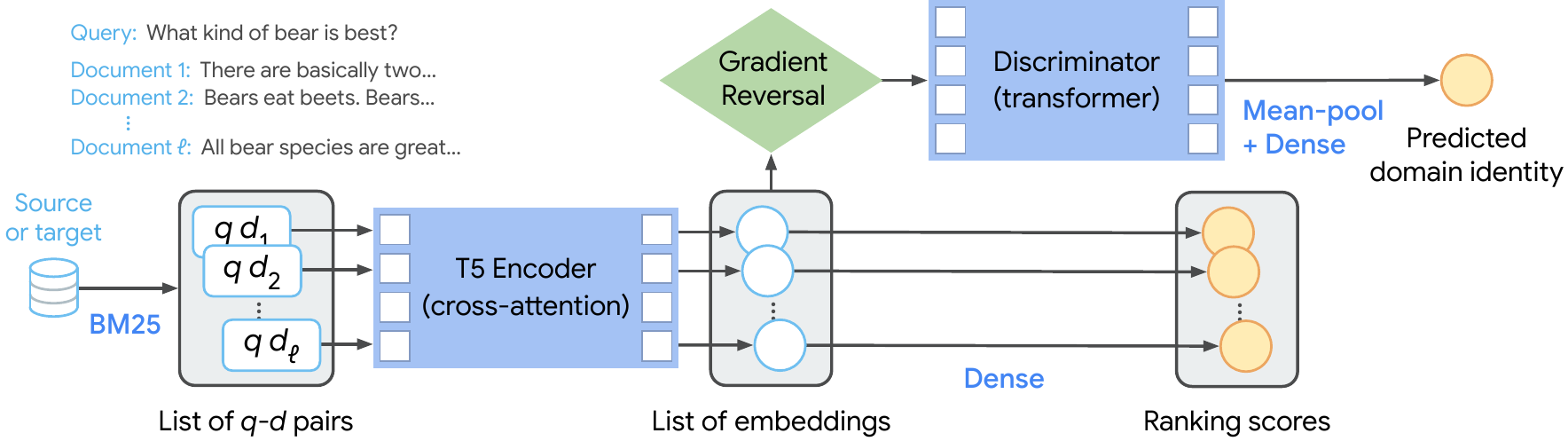}
    \caption{Block diagram of the cross-attention RankT5 text ranking model with ListDA.}
    \label{fig:block}
\end{figure}

To demonstrate the empirical benefits of invariant representation learning with list-level alignment~(ListDA) for unsupervised domain adaptation, we evaluate it on the passage reranking task\footnote{We will use the terms \textit{document} and \textit{passage} interchangeably.} and compare to zero-shot transfer, item-level alignment~(ItemDA), as well as a recent method based on query synthesis~\citep{ma2021ZeroshotNeuralPassage}.  Note that our method is not specialized to text (re)ranking; in \cref{sec:yahoo}, we evaluate ListDA on a web ranking task from the Yahoo!~Learning to Rank Challenge.

In passage reranking, given a text query, we are to rank candidate passages based on their relevance to the query from a retrieved set.  
Reranking is employed in scenarios where the corpus is too large for using accurate but expensive models such as cross-attention neural rankers to rank every~(millions of) documents. Rather, a two-stage approach is taken: a simple but efficient model such as sparse or dense retrievers (e.g., BM25~\citep{robertson2009ProbabilisticRelevanceFramework} and DPR~\citep{karpukhin2020DensePassageRetrieval}) is first used to retrieve a candidate set of~(hundreds of) passages, then a more sophisticated (re)ranking model is used to refine and improve their ranks.

We use BM25 to retrieve 1,000 candidate passages in the first-stage, and focus on the adaptation of a RankT5 listwise reranker~\citep{zhuang2022RankT5FineTuningT5}, a cross-attention model derived from the T5~base language model with 250 million parameters~\citep{raffel2020ExploringLimitsTransfer}.  
It takes the concatenation of the query and the document (q-d pair) as input and outputs an embedding.  We treat the q-d embeddings computed on the same list as the list feature---consistent with the setting in \cref{sec:alg.ranking}.  For the task training loss function, we use the softmax cross-entropy ranking loss:
%$\ell(s, y) = -\sum_{i=1}^{\ell} y_{i} \log\rbr{{\exp(s_i)}/{\sum_{j=1}^\ell \exp(s_j)}}$.
\begin{equation}
  \ell(s, y) = -\sum_{i=1}^{\ell} y_{i} \log\rbr*{\frac{\exp(s_i)}{\sum_{j=1}^\ell \exp(s_j)}}.
\end{equation}

\paragraph{Datasets.}
In our domain adaptation experiments, we use the MS~MARCO dataset for passage ranking~\citep{bajaj2018MSMARCOHuman} as the source domain, which contains 8 million passages crawled from the web covering a wide range of topics with 532,761 search query and relevant passage pairs.  The target domains are biomedical~(TREC-COVID~\citep{voorhees2021TRECCOVIDConstructingPandemic}, BioASQ~\citep{tsatsaronis2015OverviewBioASQLargescale}) and news articles~(Robust04~\citep{voorhees2005TRECRobustRetrieval}).
The data are preprocessed consistently with the BEIR benchmark~\citep{thakur2021BEIRHeterogeneousBenchmark}; their paper includes dataset statistics.

Since the training split of the target domain datasets does not contain q-d relevance annotations, our setting is unsupervised. It also does not contain any queries, but these are required for performing domain adaptation, because the input lists are defined to consist of q-d pairs. To acquire training queries on the target domains, we synthesize them in a zero-shot manner following~\mbox{\citep{ma2021ZeroshotNeuralPassage}}, by using a T5~XL query generator~(QGen) trained on MS~MARCO relevant q-d pairs.  QGen synthesizes each query as a seq-to-seq task given a passage, and the generated queries are \textit{expected} to be relevant to the input passages.  See \cref{fig:examples} for sample QGen q-d pairs.

\begin{table}[t]
\renewrobustcmd{\textbf}{\fontseries{b}\selectfont}
\caption{Reranking performance of RankT5 on top 1000 BM25-retrieved passages.}
\label{tab:main.results}
\centering
    \scalebox{0.8}{%
    \begin{threeparttable}
        \begin{tabular}{p{2.6cm}p{1.6cm}C{2cm}C{2cm}C{2cm}C{2cm}C{2cm}}
        \toprule
        Target domain & Method & MAP & MRR@10 & NDCG@5 & NDCG@10 & NDCG@20 \\
        \midrule
        \multirowcell{5}[-0.5ex][l]{Robust04}
                & BM25 & 0.2282 & 0.6801 & 0.4396 & 0.4088 & 0.3781 \\
                & Zero-shot & 0.2759 & 0.7977\textsup*{\dag} & 0.5857\textsup*{\dag} & 0.5340\textsup*{\dag} & 0.4856\textsup*{\dag}  \\
                \cmidrule(lr){2-7}
                & QGen~PL & 0.2693 & 0.7644 & 0.5406 & 0.5034 & 0.4694  \\
                & ItemDA & 0.2822\textsup*{\textasteriskcentered\dag} & 0.8037\textsup*{\dag} & 0.5822\textsup*{\dag} & 0.5396\textsup*{\dag} & 0.4922\textsup*{\dag} \\
                & ListDA & {\textbf{0.2901}}\textsup*{\textasteriskcentered\dag\ddag} & {\textbf{0.8234}}\textsup*{\textasteriskcentered\dag} & {\textbf{0.5979}}\textsup*{\dag\ddag} & {\textbf{0.5573}}\textsup*{\textasteriskcentered\dag\ddag} & {\textbf{0.5126}}\textsup*{\textasteriskcentered\dag\ddag}  \\ 
        \midrule 
        \multirowcell{5}[-0.5ex][l]{TREC-COVID}
                & BM25 & 0.2485 & 0.8396 & 0.7163 & 0.6559 & 0.6236 \\
                & Zero-shot & 0.3083 & 0.9217 & 0.8328 & 0.8200 & 0.7826 \\ 
                \cmidrule(lr){2-7}
                & QGen~PL & 0.3180\textsup*{\textasteriskcentered\ddag} & 0.8907 & 0.8373 & 0.8118 & 0.7861 \\ 
                & ItemDA & 0.3087 & 0.9080 & 0.8276 & 0.8142 & 0.7697  \\ 
                & ListDA & {\textbf{0.3187}}\textsup*{\textasteriskcentered\ddag} & {\textbf{0.9335}} & {\textbf{0.8693}}\textsup*{\textasteriskcentered\ddag} & {\textbf{0.8412}}\textsup*{\dag\ddag} & {\textbf{0.7985}}\textsup*{\ddag} \\ 
        \midrule
        \multirowcell{5}[-0.5ex][l]{BioASQ}
                & BM25 & 0.4088 & 0.5612 & 0.4580 & 0.4653 & 0.4857 \\
                & Zero-shot & 0.5008\textsup*{\ddag} & 0.6465 & 0.5484\textsup*{\ddag} & 0.5542\textsup*{\ddag} & 0.5796\textsup*{\ddag} \\
                \cmidrule(lr){2-7}
                & QGen~PL & 0.5143\textsup*{\textasteriskcentered\ddag} & 0.6551 & 0.5538\textsup*{\ddag} & 0.5643\textsup*{\ddag} & 0.5915\textsup*{\textasteriskcentered\ddag} \\
                & ItemDA & 0.4781 & 0.6383 & 0.5315 & 0.5343 & 0.5604 \\
                & ListDA & {\textbf{0.5191}}\textsup*{\textasteriskcentered\ddag} & {\textbf{0.6666}}\textsup*{\textasteriskcentered\ddag} & {\textbf{0.5639}}\textsup*{\textasteriskcentered\ddag} & {\textbf{0.5714}}\textsup*{\textasteriskcentered\ddag} & {\textbf{0.5985}}\textsup*{\textasteriskcentered\ddag} \\
        \bottomrule
        \end{tabular}
        \begin{tablenotes}[para]
        Source domain is MS MARCO.  Gain function in NDCG is the identity map.  \textsup{\textasteriskcentered}Improves upon zero-shot with statistical significance~($p\leq 0.05$) under the two-tailed Student's $t$-test.  \textsup{\dag}Improves upon QGen~PL.  \textsup{\ddag}Improves upon ItemDA.%
  \end{tablenotes}%
          \end{threeparttable}%
    }%
\end{table}

\paragraph{Methods.\protect\footnote
{All adaptation methods are applied on each source-target domain pair separately.}}

In \textbf{zero-shot} transfer, the reranker is trained on MS~MARCO and evaluated on the target domain without adaptation.
In \textbf{QGen~pseudolabel} (QGen~PL), we treat QGen q-d pairs synthesized on the target domain as relevant pairs, and train the reranker on them in addition to MS~MARCO.
This method is specific to text ranking, and is used in several recent works on domain adaptation of text retrievers and rerankers~\citep{ma2021ZeroshotNeuralPassage,sun2021FewShotTextRanking,wang2022GPLGenerativePseudo}.

\textbf{ItemDA} (item-level alignment) is the prior implementation of invariant representation learning for ranking~\citep{cohen2018CrossDomainRegularization,tran2019DomainAdaptationEnterprise,xin2022ZeroShotDenseRetrieval}, and is set up according to \cref{sec:alg.item} with adversarial training described in \cref{sec:alg.adv}. The adversarial loss is aggregated from the losses of five three-layer MLP discriminators (no improvements from using wider or more layers), $\sum_{i=1}^5\Lad(g,\fad^{(i)})$, for reducing the sensitivity to the randomness in the initialization and training process~\citep{elazar2018AdversarialRemovalDemographic}. Our \textbf{ListDA} (list-level alignment) is set up according to \cref{sec:alg.list}, and the discriminator is an ensemble of five stacks of three T5 (transformer) encoding blocks. To predict the domain identity of a list feature $z=(v_1,\cdots,v_\ell)$, we feed the vectors into the transformer blocks all at once as a sequence, take the mean-pool of the $\ell$ output embeddings, then project it to a logit with a linear layer. A block diagram of ListDA is in \cref{fig:block}.

Further details including hyperparameters and the construction of training lists (i.e., sampling negative pairs) are relegated to \cref{sec:exp:additional}, where we also include additional results such as using the pairwise logistic ranking loss, a hybrid adaptation method combining ListDA and QGen~PL, and using transformers as the discriminator for ItemDA.

\subsection{Results}

The main results are presented in \cref{tab:main.results}, where we report metrics that are standard in the literature~(e.g., TREC-COVID uses NDCG@20), and evaluate rank assignments by the descending order of the predicted scores.
Since TREC-COVID and Robust04 are annotated with 3-level relevancy, the scores are binarized for mean average precision~(MAP) and MRR as follows: on TREC-COVID, we map 0~(not relevant) and 1~(partially relevant) to negative, and 2~(fully relevant) to positive; on Robust04, 0~(not relevant) to negative, and 1~(relevant) and 2~(highly relevant) to positive.

Across all three datasets, ListDA achieves the best performance, and the fact that it uses the same training resource as QGen~PL demonstrates the added benefits of list-level invariant representations for domain adaptation.  The favorable comparison of ListDA to ItemDA corroborates the discussion in \cref{sec:alg.ranking,sec:theory} that for domain adaptation on ranking problems, item-level alignment is insufficient for transferability, sometimes even resulting in negative transfer (vs.~zero-shot). Rather, list-level alignment is the more appropriate approach.

\subsection{Analyses of ListDA}

\paragraph{Quality of QGen.}

An explanation for why ListDA outperforms QGen~PL despite the same resources is that the  sampling procedure (\cref{sec:exp.list}) of negative q-d pairs could introduce false negatives into the training data.   This is supported by the observation in~\citep{sun2021FewShotTextRanking} that QGen synthesized queries lack specificity and could be relevant to many documents.  While these false negative labels are used for training in QGen~PL, they are discarded in ListDA, which is thereby less likely to be affected by false negatives, or even false positives---when synthesized queries are not relevant to the input passages~(see \cref{fig:cases.pseudolabel} for samples).  Although out of the scope, better query generation is expected to improve the performance of both QGen~PL and ListDA.

\paragraph{Size of Target Data.}

\begin{figure}[t]
    \centering
    \includegraphics[width=0.302\linewidth]{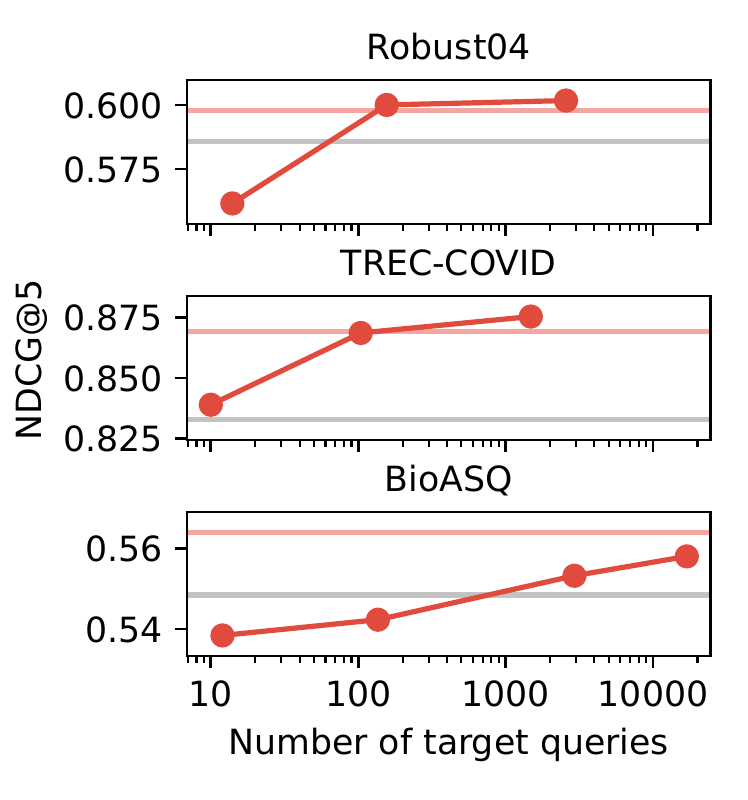}
    \caption{ListDA performance under different target sizes. Lower grey  horizontal line is zero-shot, upper red line is ListDA using all QGen queries.}
    \label{fig:sweep.dset}
\end{figure}

Unsupervised domain adaptation requires sufficient unlabeled data, but not all domains have the same amount: BioASQ has 14 million documents~(so is the total number of QGen queries, as we synthesize one per document), but Robust04 only has 528,155, and TREC-COVID 171,332.  

In \cref{fig:sweep.dset}, we plot the performance of ListDA under varying numbers of target QGen queries (which is the number of target training lists).  Surprisingly, on Robust04 and TREC-COVID, using just \textapprox100 target QGen queries~(0.03\% and 0.06\% of all, respectively) is sufficient for ListDA to achieve full performance!  Although the number of queries is small, since we retrieve 1,000 documents per query, the amount of distinct target documents in those 100 lists could be substantial---up to 100,000, or 29.5\% and 60.7\% of the respective corpora. 

The performance begins to drop when reduced to \textapprox10 queries, which caps the amount of distinct documents at 10,000~(2.7\% and 5.8\%, respectively). 
The same data efficiency, however, is not observed on BioASQ, likely due to the size and hardness of the dataset, such as the use of specialized biomedical terms~(see examples in \cref{fig:examples,fig:cases.zeroshot,fig:cases.pseudolabel}).

\section{Related Work}\label{sec:related}

\paragraph{Learning to Rank and Text Ranking.} 
Learning to rank historically focuses on tabular data, for which, a wide range of models are developed~\citep{liu2009LearningRankInformation}, from SVMs~\citep{joachims2006TrainingLinearSVMs}, gradient boosted decision trees~\citep{burges2010RankNetLambdaRankLambdaMART}, to neural rankers~\citep{burges2005LearningRankUsing,pang2020SetRankLearningPermutationInvariant,qin2021AreNeuralRankers}.  Another traditional research direction is the design of ranking losses~(surrogate to ranking metrics), which can be categorized into pointwise, pairwise, and listwise approaches~\citep{cao2007LearningRankPairwise, bruch2020StochasticTreatmentLearning,zhu2020ListwiseLearningRank,jagerman2022OptimizingTopKMetrics}.

Recent advances in large neural language models have spurred interest in applying them on text ranking tasks~\citep{lin2022PretrainedTransformersText,qin2023LargeLanguageModels}, leading to cross-attention models~\citep{han2020LearningtoRankBERTTFRanking,nogueira2020PassageRerankingBERT,nogueira2020DocumentRankingPretrained,pradeep2021ExpandoMonoDuoDesignPattern} and generative models based on query likelihood~\citep{dossantos2020CLSRankingGeneration,zhuang2021TILDETermIndependent,zhuang2021DeepQueryLikelihood,sachan2022ImprovingPassageRetrieval}.
A different line of work is neural text retrieval models, which emphasizes efficiency, and has seen the development of dual-encoder~\citep{karpukhin2020DensePassageRetrieval, zhan2021OptimizingDenseRetrieval}, late-interaction~\citep{khattab2020ColBERTEfficientEffective,hui2022ED2LMEncoderDecoderLanguage}, and models leveraging transformer memory~\citep{tay2022TransformerMemoryDifferentiable}.

\paragraph{Domain Adaptation in Information Retrieval.} 
Work on this subject can be categorized into supervised and unsupervised domain adaptation.  The former assumes access to labeled source data and~(a small amount of) labeled target data (a.k.a.~few-shot learning)~\citep{sun2021FewShotTextRanking}.  This work focuses on the unsupervised setting, where target domain data do not have  annotations.
\citet{cohen2018CrossDomainRegularization} apply invariant representation learning to unsupervised domain adaptation for text ranking, followed by \citet{tran2019DomainAdaptationEnterprise} for enterprise email search, and \citet{xin2022ZeroShotDenseRetrieval} for dense retrieval.  However, unlike our proposed list-level alignment method (ListDA), they learn invariant representations at the  item-level.  
A recent family of adaptation methods is based on query generation~\citep{ma2021ZeroshotNeuralPassage,wang2022GPLGenerativePseudo}, originally proposed for dense retrieval.

\paragraph{Invariant Representation Learning.}

Domain-invariant representation learning is a popular family of adaptation methods~\citep{long2015LearningtransferableFeatures,ganin2016DomainAdversarialTrainingNeural,courty2017JointDistributionOptimal}, to which our proposed ListDA belongs. 
Besides ranking, these methods are also applied in fields including vision and language, and on tasks ranging from cross-domain sentiment analysis, question-answering~\citep{li2017EndtoEndAdversarialMemory,vernikos2020DomainAdversarialFineTuning}, and to cross-lingual learning and machine translation~\citep{xian2022CrossLingualTransferClassWeighted,lample2018UnsupervisedMachineTranslation}.

\citet{zhao2019LearningInvariantRepresentations} and \citet{tachetdescombes2020DomainAdaptationConditional} point out that on classification problems, attaining perfect feature alignment and high source accuracy are insufficient to guarantee good target performance.  This occurs when the marginal distributions of labels differ, or the learned features contain domain-specific and nontransferable components.  Although their findings do not directly apply to ranking, still, they suggest two potential directions for future work: one is whether \cref{thm:da} would admit a fundamental lower bound under distributional shifts in the ground-truth scores, and another is to modify ListDA to include a component for isolating out nontransferable features, as in domain separation networks~\citep{bousmalis2016DomainSeparationNetworks}.

\section{Conclusion}

We proposed an implementation of invariant representation learning for ranking via list-level feature alignment, and based on which, established a domain adaptation generalization bound for ranking.  Our theoretical and empirical results illustrate the  significance of preserving the list structure for achieving effective domain adaptation, which is overlooked in prior work.

A broader message is that when working with (feature) representations, either for domain adaptation or other purposes, they should be analyzed at the same level (or structure) at which the data are defined for the task of interest and the metrics are computed. The results of this paper is such an example---by moving from item-level alignment to the more appropriate list-level alignment, we extracted more potentials from invariant representation learning for domain adaptation on ranking problems.

\section*{Acknowledgments}
We thank the anonymous reviewers for their comments and suggestions on improving the presentation.  
This research was supported in part by the Google Visiting Researcher program and in part by the Center for Intelligent Information Retrieval. Han Zhao's work was partly supported by the Defense Advanced Research Projects Agency~(DARPA) under Cooperative Agreement Number: HR00112320012. Any opinions, findings, and conclusions or recommendations expressed in this material are those of the authors and do not necessarily reflect those of the sponsors.

\bibliography{references}
\bibliographystyle{plainnat-eprint}

\appendix

\section{Proofs for Section~\ref{sec:item.list}}

\begin{proposition} \label{prop:w1.01loss}
Let $\mu,\mu'$ be distributions supported on a metric space $(\calX,d)$, define and assume $R= \sup_{(x,x')\in\supp(\mu\times \mu')}d(x,x')\leq\infty$, and let $\calL(f)$ denote the balanced 0-1 loss of a discriminator $f:\calX\rightarrow\{0,1\}$, given by
\begin{equation}
  \calL(f) \coloneqq  \E_{X\sim \mu} [f(X)] + \E_{X'\sim \mu'} [1-f(X')],
\end{equation}
then $W_1(\mu,\mu') \leq R\rbr*{1-\min_{f} \calL(f)}$.
\end{proposition}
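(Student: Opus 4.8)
The plan is to factor the inequality through the total-variation distance $\DTV(\mu,\mu')=\sup_A|\mu(A)-\mu'(A)|$, by establishing two links: (i)~$\min_f\calL(f)=1-\DTV(\mu,\mu')$, and (ii)~$W_1(\mu,\mu')\leq R\,\DTV(\mu,\mu')$. Chaining them gives the claim $W_1(\mu,\mu')\leq R(1-\min_f\calL(f))$. Throughout I would assume $R<\infty$, since otherwise the right-hand side trivially dominates: it equals $+\infty$ when $\mu\neq\mu'$, and when $\mu=\mu'$ both sides are $0$.

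For link (i), a $\{0,1\}$-valued discriminator is the indicator of a measurable set $A=f^{-1}(1)$, so $\calL(f)=\mu(A)+\mu'(A^c)=1+(\mu(A)-\mu'(A))$. Minimizing over $A$ and using $\sup_A(\mu'(A)-\mu(A))=\DTV(\mu,\mu')$ (immediate from the definition of $\DTV$ by passing to complements) yields $\min_f\calL(f)=1-\DTV(\mu,\mu')$, with the minimum attained at the Hahn set on which $\mu'\geq\mu$.

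For link (ii), I would fix an arbitrary $\psi\in\Lip(1)$; by the dual formula of \cref{def:w1} it suffices to bound $\E_\mu[\psi]-\E_{\mu'}[\psi]$. Let $\nu=\mu-\mu'$ and let $\nu=\nu^+-\nu^-$ be its Jordan decomposition; since $\nu(\calX)=0$ the two parts have equal total mass $\nu^+(\calX)=\nu^-(\calX)=\DTV(\mu,\mu')=:\delta$, and I may assume $\delta>0$ (else $\mu=\mu'$). As $\nu^+\leq\mu$ and $\nu^-\leq\mu'$ as measures (immediate from $\mu,\mu'\geq0$), the normalizations $\rho^+=\nu^+/\delta$ and $\rho^-=\nu^-/\delta$ are probability measures supported inside $\supp(\mu)$ and $\supp(\mu')$, respectively. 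Drawing $X\sim\rho^+$ and $X'\sim\rho^-$ independently, one then has $(X,X')\in\supp(\mu)\times\supp(\mu')$ almost surely, so $\psi(X)-\psi(X')\leq d(X,X')\leq R$ almost surely by $1$-Lipschitzness of $\psi$ and the definition of $R$ ($\psi$ is bounded on the two supports because $R<\infty$, so every integral below is finite). Therefore
\begin{equation}
  \E_\mu[\psi]-\E_{\mu'}[\psi]=\int\psi\,\dif\nu^+-\int\psi\,\dif\nu^-=\delta\,\E[\psi(X)-\psi(X')]\leq\delta R=R\,\DTV(\mu,\mu'),
\end{equation}
and taking the supremum over $\psi\in\Lip(1)$ gives link (ii). Combining (i) and (ii) proves the proposition.

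The proof is mostly bookkeeping, so the only point I expect to require care is pinning down the sharp constant $R$ (rather than $2R$). A tempting alternative---pushing $\mu,\mu'$ forward along $\psi$ to $\RR$ and comparing their cumulative distribution functions---naturally picks up a factor $2R$, since there the relevant quantity is the diameter of a single support. The Jordan-decomposition argument above sidesteps this because the bound $\psi(X)-\psi(X')\leq d(X,X')$ is only ever invoked for $X\in\supp(\mu)$ paired with $X'\in\supp(\mu')$, which is precisely the index set of the supremum defining $R$. An equivalent derivation of (ii) uses Kantorovich--Rubinstein duality together with the maximal coupling of $\mu$ and $\mu'$; it yields the same constant for the same reason, but I prefer the version above as it stays inside the dual formulation of $W_1$ already introduced.
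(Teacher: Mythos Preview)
Your proposal is correct and follows the same high-level plan as the paper: both factor the bound through total variation, establishing $\min_f\calL(f)=1-\DTV(\mu,\mu')$ and $W_1(\mu,\mu')\leq R\,\DTV(\mu,\mu')$ separately. The only real difference is in how the second link is argued. The paper works on the primal side: starting from the coupling definition $W_1=\inf_\gamma\int d(x,x')\dif\gamma$, it uses $d(x,x')\leq R\,\indc[x\neq x']$ on $\supp(\mu)\times\supp(\mu')$ and then the maximal coupling to arrive at $R\,\DTV$. You instead stay on the dual side (the form in \cref{def:w1}), split $\mu-\mu'$ via its Jordan decomposition, and bound $\int\psi\,\dif(\mu-\mu')$ pointwise using $\psi(X)-\psi(X')\leq d(X,X')\leq R$ for $X\in\supp(\mu)$, $X'\in\supp(\mu')$. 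The two arguments are dual to one another and yield the identical constant for the identical reason (the bound on $d$ is only invoked across the two supports); you even note the maximal-coupling alternative yourself. Your route has the mild advantage of never leaving the dual formulation the paper introduces, while the paper's route is slightly more direct once one is willing to invoke the primal definition.
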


\begin{proof}
Let $\Gamma(\mu,\mu')$ denote the collection of couplings between $\mu,\mu'$, then by the definition of the Wasserstein-$1$ distance,
\begin{align}
    W_1(\mu,\mu')&=\inf_{\gamma\in\Gamma(\mu,\mu')} \int_{\calX\times\calX} d(x,x') \dif \gamma(x,x') \\
    &\leq R\inf_{\gamma\in\Gamma(\mu,\mu')} \int_{\calX\times\calX} \indc[x\neq x'] \dif \gamma(x,x') \\
    &= R\rbr*{1-\sup_{\gamma\in\Gamma(\mu,\mu')} \int_{\calX\times\calX} \indc[x= x'] \dif \gamma(x,x')} \\
    &= R\rbr*{1- \int_\calX \min\rbr*{\mu(x), \mu'(x)} \dif x}\\
    &= R{\int_\calX \max\rbr*{0, \mu'(x) - \mu(x)} \dif x} \\
    &= \frac R2{\int_\calX  \envert*{ \mu'(x) - \mu(x) } \dif x}, \label{eq:w1.01loss.1}
    % = R\cdot \DTV(\mu,\mu').
\end{align}
because $\int \rbr{ \mu'(x) - \mu(x) } \dif x = 0$.  Note that the last line is $R$ times the total variation between $\mu,\mu'$.

On the other hand, rewrite
\begin{align}
    \calL(f) &=   \int_{\calX}  \rbr*{ f(x) \mu(x) + (1-f(x)) \mu'(x) } \dif x  
    = 1 - \int_{\calX}  \rbr*{\frac12-f(x)}\rbr*{  \mu(x) - \mu'(x) } \dif x.
\end{align}
We know that the minimum 0-1 loss is attained by the Bayes predictor, $f^*(x)=\indc[\mu'(x)\geq \mu(x)]$, so
\begin{equation}
  \min_{f} \calL(f)=\calL(f^*) = 1 - \frac12 \int_\calX \envert*{\mu(x)- \mu'(x)}\dif x \leq 1 - \frac 1R\Wone(\mu,\mu') \label{eq:w1.01loss.2}
\end{equation}
by \cref{eq:w1.01loss.1}.
The result then follows from an algebraic rearrangement of terms in \cref{eq:w1.01loss.2}.
\end{proof}

Next, we provide a concrete example (supplementing the picture in \cref{fig:list.vs.item}) for showing that item-level alignment does not necessarily imply list-level alignment.

\begin{example}
\label{sec:example}
  Consider two uniform distributions $\mu^\textnormal{list}_S,\mu^\textnormal{list}_T$ of two lists containing three real-valued items, where the lists are
  \begin{align}
    \supp(\mu^\textnormal{list}_S) = \{ (1, 2, 3), (4, 5, 6) \}, \quad
    \supp(\mu^\textnormal{list}_T) = \{ (1, 3, 5), (2, 4, 6) \}.
  \end{align}
  The item-level distributions, $\mu^\textnormal{item}_S,\mu^\textnormal{item}_T$, which are obtained by aggregating the items from all lists, are both the uniform distribution of the same six items:
  \begin{equation}
    \supp(\mu^\textnormal{item}_S) = \supp(\mu^\textnormal{item}_T) = \{1,2,3,4,5,6\}.
  \end{equation}
  Note that item-level representations are automatically aligned since $\mu^\textnormal{item}_S=\mu^\textnormal{item}_T$, but the list-level representations are not, because $\supp(\mu^\textnormal{list}_S)\cap \supp(\mu^\textnormal{list}_T)=\emptyset$. 
\end{example}

\section{Proofs for Section~\ref{sec:theory}}

This section provides the proof to \cref{thm:da}, the domain adaptation generalization bound for ranking with list-level alignment. First, recall the following properties of Lipschitz functions:

\begin{fact}[Properties of Lipschitz Functions]\label{fact:lipschitz} \
\begin{enumerate}
  \item If $f:\RR^d\rightarrow\RR$ is differentiable, then it is $L$-Lipschitz (in Euclidean distance) if and only if $\|\nabla f\|_2\leq L$.
  \item If $f:\calX\rightarrow\RR$ is $L$-Lipschitz and $g:\calX\rightarrow\RR$ is $M$-Lipschitz, then $(af+bg)\in\Lip(|a|L+|b|M)$, and $\max(f,g)\in\Lip(\max(L,M))$.
  \item If $f:\calX\rightarrow\calY$ is $L$-Lipschitz and $g:\calY\rightarrow\calZ$ is $M$-Lipschitz, then $g\circ f \in \Lip(LM)$.
\end{enumerate}
\end{fact}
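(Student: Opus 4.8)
The plan is to dispatch the three items in order, each by a short direct argument; no machinery beyond the mean value theorem, the Cauchy--Schwarz inequality, and the triangle inequality is needed. Throughout, ``$L$-Lipschitz'' for a real-valued map means Lipschitz into $(\RR,|\cdot|)$, and for item (1) the metric on the domain is the Euclidean one.

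For item (1), the substantive direction is ``$\|\nabla f\|_2\le L$ implies $f\in\Lip(L)$''. I would fix $x,x'\in\RR^d$, set $\gamma(t)=x+t(x'-x)$ and $\phi(t)=f(\gamma(t))$ for $t\in[0,1]$; by the chain rule $\phi$ is differentiable with $\phi'(t)=\langle\nabla f(\gamma(t)),x'-x\rangle$, so the mean value theorem gives $f(x')-f(x)=\phi(1)-\phi(0)=\langle\nabla f(\gamma(\xi)),x'-x\rangle$ for some $\xi\in(0,1)$, and Cauchy--Schwarz together with the hypothesis bound this by $\|\nabla f(\gamma(\xi))\|_2\,\|x'-x\|_2\le L\,\|x'-x\|_2$. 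For the converse, assuming $f\in\Lip(L)$: for any unit vector $u$ the directional derivative obeys $|\langle\nabla f(x),u\rangle|=\lim_{t\to0}|f(x+tu)-f(x)|/|t|\le L$, and taking $u=\nabla f(x)/\|\nabla f(x)\|_2$ when $\nabla f(x)\neq0$ (the case $\nabla f(x)=0$ being trivial) yields $\|\nabla f(x)\|_2\le L$.

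For item (2): the $af+bg$ claim follows from the triangle inequality in $\RR$, $|(af+bg)(x)-(af+bg)(x')|\le|a|\,|f(x)-f(x')|+|b|\,|g(x)-g(x')|\le(|a|L+|b|M)\,d_\calX(x,x')$. For $\max(f,g)$ I would first record the elementary scalar inequality $|\max(a,b)-\max(c,d)|\le\max(|a-c|,|b-d|)$: from $a\le c+|a-c|\le\max(c,d)+\max(|a-c|,|b-d|)$ and likewise $b\le\max(c,d)+\max(|a-c|,|b-d|)$ one gets $\max(a,b)-\max(c,d)\le\max(|a-c|,|b-d|)$, and the reverse bound follows by swapping $(a,b)$ with $(c,d)$. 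Instantiating with $(a,b,c,d)=(f(x),g(x),f(x'),g(x'))$ and applying the Lipschitz bounds on $f$ and $g$ gives $|\max(f,g)(x)-\max(f,g)(x')|\le\max(L\,d_\calX(x,x'),M\,d_\calX(x,x'))=\max(L,M)\,d_\calX(x,x')$. Item (3) is just composition of the two Lipschitz estimates: $d_\calZ(g(f(x)),g(f(x')))\le M\,d_\calY(f(x),f(x'))\le ML\,d_\calX(x,x')$.

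There is no real obstacle here; the only points warranting slight care are invoking differentiability (as opposed to continuous differentiability) correctly in the ``only if'' half of item (1) via the directional-derivative limit, and the brief verification of the scalar $\max$ inequality used in item (2).
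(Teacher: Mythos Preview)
Your proposal is correct and follows essentially the same route as the paper: mean value theorem plus Cauchy--Schwarz for the ``if'' in item (1), the directional-derivative limit in the gradient direction for the ``only if'', the triangle inequality for $af+bg$, and direct composition for item (3). The only cosmetic difference is in the $\max$ part of item (2): the paper argues by a WLOG assumption and case split on whether $f(x)\ge g(x)$, whereas you first isolate the clean scalar inequality $|\max(a,b)-\max(c,d)|\le\max(|a-c|,|b-d|)$ and then instantiate---both arrive at the same bound.
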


\begin{proof} \ 
\begin{enumerate}
\item For the backward direction, suppose bounded gradient norm, $\|\nabla f\|_2\leq L$, then by the mean value theorem, $\exists t\in[0,1]$ s.t.~$f(y)-f(x)=\nabla f(z)^\top (y-x)$ with $z\coloneqq (1-t)x+ty$, so by Cauchy-Schwarz inequality,
\begin{equation}
\|f(y)-f(x)\|_2\leq \|\nabla f(z)\|_2\, \|y-x\|_2 \leq L\|y-x\|_2.  
\end{equation}

For the forward direction, suppose $f\in \Lip(L)$, then by differentiability, $\nabla f(x)^\top z = f(x+z)-f(x)+o(\|z\|_2)$.  Set $z\coloneqq t\nabla f(x)$, we have
\begin{equation}
t\|\nabla f(x)\|_2^2 = f(x + t\nabla f(x)) - f(x) + o(t\nabla f(x)) \leq Lt\|\nabla f(x)\|_2 + o(t\|\nabla f(x)\|_2),
\end{equation}
and the result follows by dividing both sides by $t\|\nabla f(x)\|_2$ and taking $t\rightarrow 0$.

\item First,
  \begin{align}
    \envert*{af(x)+bg(x)-(af(y)+bg(y))} &\leq |a|\envert*{f(x)-f(y)} + |b|\envert*{g(x)-g(y)} \\&\leq (|a|L + |b|M)\operatorname{\mathit{d_\calX}}(x,y).
  \end{align}
  Next, assume w.l.o.g.\ $\max(f(x),g(x)) - \max(f(y),g(y))\geq0$, then
  \begin{align}
    \MoveEqLeft \envert{\max(f(x),g(x)) - \max(f(y),g(y))}\\
    &= \begin{cases}
      f(x) - \max(f(y),g(y)) \leq f(x) - f(y) \leq L \operatorname{\mathit{d_\calX}}(x,y) & \text{if $f(x)\geq g(x)$} \\
      g(x) - \max(f(y),g(y)) \leq g(x) - g(y) \leq M \operatorname{\mathit{d_\calX}}(x,y) & \text{else}
    \end{cases} \\
    &\leq \max(L,M)\operatorname{\mathit{d_\calX}}(x,y).
  \end{align}

\item $d_\calZ(g\circ f(x), g\circ f(y)) \leq M\operatorname{\mathit{d_\calY}}(f(x), f(y)) \leq LM\operatorname{\mathit{d_\calX}}(x,y)$. \qedhere
\end{enumerate}
\end{proof}

We first prove \cref{thm:da.cls} as a warm-up (restated below), the domain adaptation generalization bound for binary classification~\citep{shen2018WassersteinDistanceGuided}, since its proof shares the same organization as that of our \cref{thm:da}, and it helps familiarizing readers of prior domain adaptation results and analysis techniques.

\begin{theorem}
  Let binary classification problems on a source and a target domain be given by joint distributions $\mu_S,\mu_T$ of inputs and labels, $(X,Y)\in\calX\times\{0,1\}$. Define the error rate of a predictor $f:\calX\rightarrow[0,1]$ on $\mu$ by 
\begin{equation}
  \calR(f) =  \E_{(X,Y)\sim \mu}[ f(X)\indc[ Y \neq 1] + (1-f(X))\indc[ Y \neq 0]].
\end{equation}

  Let $\calF\subset [0,1]^\calX$ be an $L$-Lipschitz class of predictors. Define the minimum joint error rate by $\lambda^* = \min_{f'}(\calR_S(f') + \calR_T(f'))$, then for all $f\in \calF$,
  \begin{equation}
    \calR_T(f)\leq \calR_S(f)  + 2L \Wone(\mu^X_S, \mu^X_T) + \lambda^*.
  \end{equation}
\end{theorem}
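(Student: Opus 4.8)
The plan is to follow the classical template of \citet{ben-david2007AnalysisRepresentationsDomain} and \citet{shen2018WassersteinDistanceGuided}: route everything through an auxiliary \emph{ideal joint predictor} and chain together a few triangle inequalities, with the Wasserstein term entering via Kantorovich--Rubinstein duality. For predictors $f,f'\in[0,1]^\calX$ and a domain $\mu$, define the disagreement
\[
  \epsilon_\mu(f,f') \coloneqq \E_{X\sim\mu^X}\sbr{\envert{f(X)-f'(X)}},
\]
and fix $f^*\in\calF$ attaining $\lambda^*=\calR_S(f^*)+\calR_T(f^*)$. Writing $\eta_\mu(x)\coloneqq\Pr_{(X,Y)\sim\mu}(Y=1\mid X=x)$, the error rate takes the form $\calR_\mu(f)=\E_{X\sim\mu^X}[f(X)(1-\eta_\mu(X))+(1-f(X))\eta_\mu(X)]$, which I would use throughout.

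First I would record three elementary inequalities. (i) \emph{One-sided triangle inequality for the risk:} $\envert{\calR_\mu(f)-\calR_\mu(f')}\le\epsilon_\mu(f,f')$, obtained by subtracting the two error rates to get $\calR_\mu(f)-\calR_\mu(f')=\E_{X\sim\mu^X}[(f(X)-f'(X))(1-2\eta_\mu(X))]$ and using $\envert{1-2\eta_\mu}\le1$. (ii) \emph{Reverse triangle inequality:} $\epsilon_\mu(f,f')\le\calR_\mu(f)+\calR_\mu(f')$, which I would prove pointwise from the claim that $\envert{a-b}\le a(1-c)+(1-a)c+b(1-c)+(1-b)c$ for all $a,b,c\in[0,1]$; the right-hand side is affine in $c$ and equals $a+b$ at $c=0$ and $2-a-b$ at $c=1$, both $\ge\envert{a-b}$, so the bound holds on $[0,1]$, and one then sets $a=f(X)$, $b=f'(X)$, $c=\eta_\mu(X)$ and integrates. (iii) \emph{Transfer of disagreement across domains:} since $f,f^*\in\calF\subseteq\Lip(L)$, the map $x\mapsto\envert{f(x)-f^*(x)}$ is $2L$-Lipschitz (\cref{fact:lipschitz}), so by \cref{def:w1},
\[
  \envert{\epsilon_{\mu_T}(f,f^*)-\epsilon_{\mu_S}(f,f^*)}=\envert*{\int_\calX\envert{f(x)-f^*(x)}\rbr{\mu_T^X(x)-\mu_S^X(x)}\dif x}\le 2L\,\Wone(\mu_S^X,\mu_T^X).
\]

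Assembling the chain: by (i), $\calR_T(f)\le\calR_T(f^*)+\epsilon_{\mu_T}(f,f^*)$; by (iii), $\epsilon_{\mu_T}(f,f^*)\le\epsilon_{\mu_S}(f,f^*)+2L\,\Wone(\mu_S^X,\mu_T^X)$; by (ii), $\epsilon_{\mu_S}(f,f^*)\le\calR_S(f)+\calR_S(f^*)$. Substituting and collecting $\calR_T(f^*)+\calR_S(f^*)=\lambda^*$ yields $\calR_T(f)\le\calR_S(f)+2L\,\Wone(\mu_S^X,\mu_T^X)+\lambda^*$, as claimed.

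The argument is almost mechanical, so there is no single hard step; rather, the only care needed beyond the standard deterministic-label proof is that the predictors and labels here are \emph{soft}, so the triangle inequalities (i) and (ii) cannot be invoked as metric axioms on label functions but must be derived from the explicit form of the error rate as above, and one must ensure $f^*$ is taken inside the $L$-Lipschitz class $\calF$ so that $\envert{f-f^*}$ is Lipschitz and step (iii) applies (if the minimum defining $\lambda^*$ ranges over all predictors, a Lipschitz near-minimizer is used instead, at the cost of an arbitrarily small slack). This is only the warm-up: in the ranking bound \cref{thm:da} the same three-step skeleton recurs, but $\epsilon_\mu$ must be replaced by a utility-gap disagreement between P-L-induced rank distributions whose Lipschitz continuity in the list features is exactly what \cref{ass:pl,ass:metric,ass:gt,ass:scorer,ass:feature.map} are designed to control — that is where the real work lies.
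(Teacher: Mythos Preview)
Your proposal is correct and follows essentially the same three-step skeleton as the paper's proof: the one-sided triangle inequality $\calR(f)-\calR(f')\le \E_{\mu^X}\envert{f-f'}$, the reverse inequality $\E_{\mu^X}\envert{f-f'}\le \calR(f)+\calR(f')$, and the Kantorovich--Rubinstein step via the $2L$-Lipschitzness of $\envert{f-f'}$, chained through an ideal joint predictor. The only cosmetic difference is that you condition out $Y$ and work with the regression function $\eta_\mu(x)\in[0,1]$ (proving (ii) via an affine-in-$c$ pointwise bound), whereas the paper keeps $\eta=\indc[Y=1]\in\{0,1\}$ as a random variable and exploits $\envert{2\eta-1}=1$ directly; both arguments are equivalent, and your remark about needing $f^*\in\calF$ for step (iii) matches the paper's implicit restriction of the final minimization to $\calF$.
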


\begin{proof}
Define random variable $\eta= \indc[Y=1]$, then we may rewrite
\begin{equation}
  \calR(f)=\EE_{(X,Y)\sim\mu}[\eta-(2\eta-1)f(X)].
\end{equation}
Note that because $2\eta - 1\in\{-1,+1\}$,
\begin{align}
  \calR(f) - \calR(f')
  &= \EE_{(X,Y)\sim\mu}[ \eta-(2\eta-1)f(X) ] - \EE_{(X,Y)\sim\mu}[ \eta-(2\eta-1)f'(X)] \\
  &= \EE_{(X,Y)\sim\mu}[ (2\eta-1)(f'(X)-f(X)) ] \\
  &\leq \EE_{X\sim\mu^X}[ |f'(X)-f(X)| ].
\end{align}
On the other hand,
\begin{align}
  \EE_{X\sim\mu^X}[ |f(X)-f'(X)| ]
  &= \EE_{(X,Y)\sim\mu}[ |(2\eta - 1)(f(X) -f'(X)) -\eta + \eta| ] \\
  &\leq \EE_{(X,Y)\sim\mu}[ |(2\eta - 1)f(X) - \eta|] + \EE_{(X,Y)\sim\mu}[ |-(2\eta - 1)f'(X) + \eta| ] \\
  &= \EE_{(X,Y)\sim\mu}[ \eta - (2\eta - 1)f(X) ] + \EE_{(X,Y)\sim\mu}[ \eta -(2\eta - 1)f'(X) ] \\
  &= \calR(f') + \calR(f).
\end{align}

Then by \cref{fact:lipschitz}, the fact that the operation of taking the absolute value is $1$-Lipschitz, and \cref{def:w1} of $W_1$ distance, for all $f,f'\in\calF$,
\begin{align}
  \calR_T(f)  
  &= \calR_S(f) + \rbr*{\calR_T(f)- \calR_T(f')} - \rbr*{\calR_S(f) + \calR_S(f')} +  \rbr*{\calR_S(f') + \calR_T(f')} \\
  &\leq \calR_S(f) + { \EE_{X\sim\mu^X_T}[ |f(X)-f'(X)| ] - \EE_{X\sim\mu^X_S}[ |f(X)-f'(X)| ] } +  \rbr*{\calR_S(f') + \calR_T(f')} \\
  &\leq \calR_S(f) + \sup_{q\in\Lip(2L)} ( \EE_{X\sim\mu^X_T}[ q(X) ] - \EE_{X\sim\mu^X_S}[ q(X) ] ) +  \rbr*{\calR_S(f') + \calR_T(f')} \\
  &\leq \calR_S(f) + 2L\Wone(\mu^X_S,\mu^X_T) +  \rbr*{\calR_S(f') + \calR_T(f')},
\end{align}
and the result follows by taking the min over $f'$.
\end{proof}

Next, we prove \cref{thm:da}.  The main idea is that under the setup and assumptions in \cref{sec:theory}, $\calR_S$ and $\calR_T$ can be written as the expectation of some Lipschitz function of $Z\sim \mu^Z_S$ and $\mu^Z_T$, respectively, so by \cref{def:w1}, their difference is upper bounded by a Lipschitz constant times $W_1(\mu^Z_S, \mu^Z_T)$.

Although omitted, we remark that \cref{thm:da} can be extended to the cutoff version of the ranking metric $u$ (e.g., NDCG@$p$) with a simple modification of the proof.  Also, a finite sample generalization bound could be obtained using, e.g., Rademacher complexity, and additionally assuming Lipschitzness of the end-to-end scoring model~\citep{ben-david2007AnalysisRepresentationsDomain,shalev-shwartz2014UnderstandingMachineLearning}. 

\begin{proof}[Proof of \cref{thm:da}]
Fix $g\in\calG$, and consider $\mu^X_S$ for now (analogous results hold for $\mu^X_T$). Which by \cref{ass:feature.map}, when restricted to $\supp(\mu^X_S)$, has an $L_g$-Lipschitz inverse $g^{-1}$.  Define function $\epsilon_{h,g}:\calZ\rightarrow\RR_{\geq0}$ for a given $h:\calZ\rightarrow\RR^\ell$ as
\begin{align}
\epsilon_{h,g}(z) 
={}& \max_{r\in S_\ell}u(r, y_S\circ g^{-1}(z)) - \EE_{R\sim p_{\exp(h(z))}}\sbr{u(R, y_S\circ g^{-1}(z))} \label{eq:thm2.eps} \\
={}& \max_{r\in S_\ell}u(r, y_S\circ g^{-1}(z)) - \sum_{r\in S_\ell}  u(r, y_S\circ g^{-1}(z)) \prod_{i=1}^\ell \frac{\exp(h(z)_{I(r)_i})}{\sum_{j=i}^\ell \exp(h(z)_{I(r)_j})},
\end{align}
and note that $\calR_S(h\circ g) = \EE_{X\sim \mu^X_S}[\epsilon_{h,g}(g(X))]=:\EE_{Z\sim \mu^Z_S}[\epsilon_{h,g}(z)]$.  We show that $\epsilon_{h,g}$ is Lipschitz provided that $h$ is Lipschitz, from establishing the Lipschitzness of both terms in \cref{eq:thm2.eps}.   

For the first term, because $u$ is $L_u$-Lipschitz w.r.t.\ $y_S\circ g^{-1}(z)$ and $y_S\circ g^{-1}(z)$ is in turn $L_yL_g$-Lipschitz w.r.t.\ $z$ by \cref{ass:metric,ass:gt}, it follows that $z\mapsto u(r, y_S\circ g^{-1}(z))$ is $L_uL_yL_g$-Lipschitz w.r.t.\ $z$ for any fixed $r\in S_\ell$, and therefore so is the function $z\mapsto \max_{r\in S_\ell}u(r, y_S\circ g^{-1}(z))$ by \cref{fact:lipschitz}.

For the second term, we show that it is Lipschitz (in Euclidean distance) w.r.t.\ both $y_S\circ g^{-1}(z)=:y$ and $h(z)=:s$.
By Jensen's inequality and \cref{fact:lipschitz},
\begin{align}
  \enVert{\nabla_y \EE_{R\sim p_{\exp(s)}}\sbr{u(R, y)}}_2 = \enVert{ \EE_{R\sim p_{\exp(s)}}\sbr{\nabla_y u(R, y)}}_2 
  \leq \EE_{R\sim p_{\exp(s)}} \sbr{ \enVert*{  \nabla_y u(R, y)}_2 }  \leq L_u.
\end{align}
And by the definition that $\nabla_x f(x)=[\frac{\partial}{\partial x_1}f(x),\cdots, \frac{\partial}{\partial x_d}f(x)]$ for any $f:\RR^d\rightarrow\RR$,
\begin{align}
  \MoveEqLeft \enVert{\nabla_s \EE_{R\sim p_{\exp(s)}}\sbr{u(R, y)}}_2
  \\
  &\leq \enVert{\nabla_s \EE_{R\sim p_{\exp(s)}}\sbr{u(R, y)}}_1 \\
  &= \sum_{m=1}^\ell \envert*{     \sum_{r\in S_\ell} u(r, y)  \rbr*{\frac{\partial}{\partial s_{I(r)_m} } \prod_{i=1}^\ell \frac{\exp(s_{I(r)_i})}{\sum_{j=i}^\ell \exp(s_{I(r)_j})} } } \\
&=  \sum_{m=1}^\ell  \envert*{ \sum_{r\in S_\ell}   u(r, y)  \sum_{i=1}^\ell \rbr*{ \frac{\partial}{\partial s_{I(r)_m}} \frac{\exp(s_{I(r)_i})}{\sum_{j=i}^\ell \exp(s_{I(r)_j})} }  \prod_{n\neq i} \frac{\exp(s_{I(r)_n})}{\sum_{j=n}^\ell \exp(s_{I(r)_j})}}  \\
  &=  \sum_{m=1}^\ell  \sum_{r\in S_\ell}  u(r, y) \rbr*{\prod_{n=1}^\ell \frac{\exp(s_{I(r)_n})}{\sum_{j=n}^\ell \exp(s_{I(r)_j})}}  \envert*{\sum_{i=1}^m \rbr*{ \indc[m=i]- \frac{\exp(s_{I(r)_m})}{\sum_{j=i}^\ell \exp(s_{I(r)_j})} }}  \\
  &\leq  B \sum_{m=1}^\ell\sum_{r\in S_\ell}  \rbr*{\prod_{n=1}^\ell \frac{\exp(s_{I(r)_n})}{\sum_{j=n}^\ell \exp(s_{I(r)_j})}}    \sum_{i=1}^m \rbr*{ \indc[m=i] + \frac{\exp(s_{I(r)_m})}{\sum_{j=i}^\ell \exp(s_{I(r)_j})} }  \\
  &= B \sum_{r\in S_\ell}  \rbr*{\prod_{n=1}^\ell \frac{\exp(s_{I(r)_n})}{\sum_{j=n}^\ell \exp(s_{I(r)_j})}}   \sum_{i=1}^\ell \sum_{m=i}^\ell \rbr*{ \indc[m=i] + \frac{\exp(s_{I(r)_m})}{\sum_{j=i}^\ell \exp(s_{I(r)_j})} }  \\
  &= B \sum_{r\in S_\ell}  \rbr*{\prod_{n=1}^\ell \frac{\exp(s_{I(r)_n})}{\sum_{j=n}^\ell \exp(s_{I(r)_j})}}     \rbr*{  \ell + \sum_{i=1}^\ell  \frac{\sum_{m=i}^\ell \exp(s_{I(r)_m})}{\sum_{j=i}^\ell \exp(s_{I(r)_j})} } \\
  &= 2B\ell \sum_{r\in S_\ell}  \rbr*{\prod_{n=1}^\ell \frac{\exp(s_{I(r)_n})}{\sum_{j=n}^\ell \exp(s_{I(r)_j})}}\\
  &= 2B\ell,
\end{align}
where the second equality is due to the product rule, the third equality to $\frac{\dif}{\dif x_j}\softmax(x)_i= \softmax(x)_i(\indc[i=j]-\softmax(x)_j)$, the last inequality to \cref{ass:metric}, and the last equality to identifying the pmf of the P-L model~(\cref{def:pl}).
Because $h\in\Lip(L_h)$ by \cref{ass:scorer}, the two results above imply that the second term in \cref{eq:thm2.eps} is Lipschitz: for all $z,z'\in\calZ$,
\begin{align}
 \MoveEqLeft \envert*{ \EE_{R\sim p_{\exp(h(z))}}\sbr{u(R, y_S\circ g^{-1}(z))} - \EE_{R\sim p_{\exp(h(z'))}}\sbr{u(R, y_S\circ g^{-1}(z'))}  } \\
& \leq \begin{multlined}[t][5.5in]
  \Big| \EE_{R\sim p_{\exp(h(z))}}\sbr{u(R, y_S\circ g^{-1}(z))} - \EE_{R\sim p_{\exp(h(z))}}\sbr{u(R, y_S\circ g^{-1}(z'))} \Big| \\  + \Big| \EE_{R\sim p_{\exp(h(z))}}\sbr{u(R, y_S\circ g^{-1}(z'))} - \EE_{R\sim p_{\exp(h(z'))}}\sbr{u(R, y_S\circ g^{-1}(z'))} \Big|
 \end{multlined}\\
  &  \leq L_u\|y_S\circ g^{-1}(z)-y_S\circ g^{-1}(z')\|_2 + 2B\ell \|h(z)-h(z')\|_2\\
  & \leq (L_uL_yL_g + 2B\ell L_h) \operatorname{\mathit{d_\calZ}}(z,z').
\end{align}

Combining the two parts, we have that $\epsilon_{h,g}$ is $2(L_uL_yL_g + B\ell L_h)$-Lipschitz in $z$.

Finally, by \cref{fact:lipschitz} and \cref{def:w1}, for all $g\in\calG$ and $h,h'\in \calH$,
\begin{align}
\calR_T(h\circ g) 
&= \begin{multlined}[t][5in]
\calR_S(h\circ g) + \rbr*{\calR_T(h\circ g)- \calR_T(h'\circ g)} - \rbr*{\calR_S(h\circ g)+ \calR_S(h'\circ g)} \\
 +  \rbr*{\calR_S(h'\circ g) + \calR_T(h'\circ g)} 
 \end{multlined}\\
&\leq \begin{multlined}[t][5in]
 \calR_S(h\circ g) + \rbr*{\calR_T(h\circ g)- \calR_T(h'\circ g)} - \rbr*{\calR_S(h\circ g)- \calR_S(h'\circ g)} \\
 +  \rbr*{\calR_S(h'\circ g) + \calR_T(h'\circ g)} 
 \end{multlined}\\
&= \begin{multlined}[t][5in]
\calR_S(h\circ g) + \EE_{Z\sim \mu^Z_T}[\epsilon_{h,g}(Z) - \epsilon_{h',g}(Z)] - \EE_{Z\sim \mu^Z_S}[\epsilon_{h,g}(Z) - \epsilon_{h',g}(Z)] \\ +  \rbr*{\calR_S(h'\circ g) + \calR_T(h'\circ g)} 
 \end{multlined}\\
&\leq \begin{multlined}[t][5in]
 \calR_S(h\circ g) + \sup_{q\in\Lip(4(L_uL_yL_g + B\ell L_h))} \rbr{ \EE_{Z\sim \mu^Z_T}[q(Z)] - \EE_{Z\sim \mu^Z_S}[q(Z)]} \vspace{-0.7em} \\ +  \rbr*{\calR_S(h'\circ g) + \calR_T(h'\circ g)} 
 \end{multlined}\\
&\leq \calR_S(h\circ g) + 4(L_uL_yL_g + B\ell L_h)\Wone(\mu^Z_S,\mu^Z_T) +  \rbr*{\calR_S(h'\circ g) + \calR_T(h'\circ g)},
\end{align}
and the result follows by taking the min over $h'$.
\end{proof}

Finally, we verify the Lipschitzness of RR and NDCG.

\begin{proof}[Proof of \cref{cor:da.rr}]
Because $\RRR\leq1$ uniformly and $\|y-y'\|_2\geq 1$ for all $y,y'\in\{0,1\}^\ell, y\neq y'$, so $y\mapsto\RRR(r,y)$ is $1$-Lipschitz.
\end{proof}

\begin{proof}[Proof of \cref{cor:da.ndcg}]
We show that NDCG is Lipschitz w.r.t.\ $y$ under the assumptions. Recall that
\begin{align}
\NDCG(r,y) =  \frac{\DCG(r,y)}{\IDCG(y)} = \rbr*{\sum_{i=1}^\ell\frac{y_{i}}{\log(r^*_i+1)}}^{-1} \sum_{i=1}^\ell\frac{y_{i}}{\log(r_i+1)},
\end{align}
where $r^*$ is the descending order of $y$.

Note that $\IDCG$ is continuous and piecewise differentiable w.r.t.\ $y$ (each piece is associated with some $r'\in S_\ell$ and supported on $\{y: \argmax_{r}\DCG(r,y) = r'\}$), then so is $\IDCG^{-1}$ on the set $\calY'\coloneqq \{y:C^{-1}\leq\IDCG(y)\leq C\}$ (which is our assumption) by chain rule. Clearly, $\DCG$ is continuous and differentiable w.r.t.\ $y$, so $\NDCG$ is continuous and piecewise differentiable  on $\calY'$ by product rule.  By \cref{fact:lipschitz}, to show that $\NDCG$ is Lipschitz, we just need to show that its gradient norm w.r.t.\ $y$ is bounded when evaluated on (the interior of) each piece.

Let $r\in S_\ell$ be arbitrary, and $y\in \calY'$ be s.t.\ it lies in the interior of the respective piece, namely, by denoting $r^*$ the descending order of $y$, the interior of the set $\{y': \argmax_{r}\DCG(r,y') = r^*\}=\{y': y'_{I(r^*)_1} \geq y'_{I(r^*)_2}\geq\cdots\geq y'_{I(r^*)_\ell}\}$ (recall that $I(r)_i$ denotes the index of the item with rank $i$).  Then, for all $k\in[\ell]$,
\begin{align}
\envert*{\frac{\partial}{\partial y_k}\NDCG(r,y)}  &= \envert*{\IDCG(y)^{-1}  \frac{\partial}{\partial y_k} \sum_{i=1}^\ell\frac{y_i}{\log(r_i+1)}   -   \DCG(r,y)  \rbr*{\frac{\partial}{\partial y_k}\sum_{i=1}^\ell\frac{y_{i}}{\log(r^*_i+1)}}^{-2} } \\
  &\leq \envert*{\IDCG(y)^{-1} {\log(r_k+1)}^{-1} }  + \envert*{  \DCG(r,y) {\log(r^*_k+1)}^{2} } \\
  &\leq \envert*{\IDCG(y)^{-1} \log (2)^{-1} }  + \envert*{  \DCG(r,y)  \log(\ell+1)^2 } \\
  &\leq C + C \log(\ell+1)^2,
\end{align}
hence $\enVert*{\nabla_y\NDCG(r,y)}_2\leq \sqrt\ell(C + C \log(\ell+1)^2)\leq \widetilde O(C\sqrt\ell)$, and NDCG is $\widetilde O(C\sqrt\ell)$-Lipschitz w.r.t.\ $y$.  Here, $\widetilde O$ hides polylogarithmic terms.
\end{proof}

\section{Additional Experiments and Details on Passage Reranking}\label{sec:exp:additional}

\begin{table}[htp]
\caption{Reranking performance of RankT5 on top 1000 BM25-retrieved passages; additional results extending \cref{tab:main.results}.}
\centering
\begin{subtable}[t]{\linewidth}%
\centering
\caption{With pairwise logistic ranking loss in place of softmax cross-entropy on Robust04.}%
\label{tab:pairlogit}
    \scalebox{0.8}{%
    \begin{threeparttable}
        \begin{tabular}{p{2.6cm}p{3.8cm}C{2cm}C{2cm}C{2cm}C{2cm}C{2cm}}
        \toprule
        Target domain & Method & MAP & MRR@10 & NDCG@5 & NDCG@10 & NDCG@20 \\
        \midrule
        \multirowcell{4}[-0.5ex][l]{Robust04}
        & Zero-shot       & 0.2656 & 0.7894 & 0.5671 & 0.5163 & 0.4729 \\
        \cmidrule(lr){2-7}
        & QGen~PL         & 0.2776\textsup*{\textasteriskcentered} & 0.7975 & 0.5576 & 0.5267 & 0.4892\textsup*{\textasteriskcentered} \\
        & ItemDA (MLP)         & 0.2766\textsup*{\textasteriskcentered} & 0.8021 & 0.5794 & 0.5355\textsup*{\textasteriskcentered} & 0.4917\textsup*{\textasteriskcentered} \\
        & ListDA          & {{0.2893}}\textsup*{\textasteriskcentered\dag\ddag} & {{0.8103}} & {{0.5935}}\textsup*{\textasteriskcentered\dag\ddag} & {{0.5524}}\textsup*{\textasteriskcentered\dag\ddag} & {{0.5044}}\textsup*{\textasteriskcentered\dag\ddag} \\
        \bottomrule
        \end{tabular}
         \begin{tablenotes}[para]
        Source domain is MS~MARCO.  Gain function in NDCG is the identity map.  \textsup{\textasteriskcentered}Improves upon zero-shot with statistical significance~($p\leq 0.05$) under the two-tailed Student's $t$-test.  \textsup{\dag}Improves upon QGen~PL.  \textsup{\ddag}Improves upon ItemDA.%
  \end{tablenotes}%
          \end{threeparttable}%
    }%
\end{subtable}%

\vspace*{1em}%

\begin{subtable}[t]{\linewidth}%
\centering
\caption{With ItemDA with a transformer discriminator.}%
\label{tab:itemda.tr}
    \scalebox{0.8}{%
    \begin{threeparttable}
        \begin{tabular}{p{2.6cm}p{3.8cm}C{2cm}C{2cm}C{2cm}C{2cm}C{2cm}}
        \toprule
        Target domain & Method & MAP & MRR@10 & NDCG@5 & NDCG@10 & NDCG@20 \\
        \midrule
        Robust04 & \multirowcell{3}[0pt][l]{ItemDA (transformer)} & 0.2848 & 0.8018 & 0.5851 & 0.5406 & 0.4982 \\
        {TREC-COVID} && 0.3088 & 0.8907 & 0.8287 & 0.8069 & 0.7691 \\
        {BioASQ} && 0.4739 & 0.6420 & 0.5256 & 0.5312 & 0.5567 \\
        \bottomrule
        \end{tabular}
         \begin{tablenotes}[para]
        Source domain is MS~MARCO.  Gain function in NDCG is the identity map.
  \end{tablenotes}%
          \end{threeparttable}%
    }%
\end{subtable}%

\vspace*{1em}%

\begin{subtable}[t]{\linewidth}%
\centering%
\caption{With ListDA~+~QGen~PL method.}%
\label{tab:listda.qgenpl}
    \scalebox{0.8}{%
    \begin{threeparttable}
        \begin{tabular}{p{2.6cm}p{3.8cm}C{2cm}C{2cm}C{2cm}C{2cm}C{2cm}}
        \toprule
        Target domain & Method & MAP & MRR@10 & NDCG@5 & NDCG@10 & NDCG@20 \\
        \midrule
        Robust04 & \multirowcell{3}[0pt][l]{ListDA~+~QGen~PL} & 0.2851\textsup*{\textasteriskcentered\dag} & 0.8039\textsup*{\dag} & 0.5761\textsup*{\dag} & 0.5386\textsup*{\dag} & 0.4975\textsup*{\dag} \\
        {TREC-COVID} && 0.3168 & 0.8950 & 0.8539 & 0.8292 & 0.7820 \\
        {BioASQ} && 0.6538\textsup*{\textasteriskcentered\ddag} & 0.5158 & 0.5547\textsup*{\ddag} & 0.5671\textsup*{\textasteriskcentered\ddag} & 0.5931\textsup*{\textasteriskcentered\ddag} \\
        \bottomrule
        \end{tabular}
         \begin{tablenotes}[para]
        Source domain is MS~MARCO.  Gain function in NDCG is the identity map.  \textsup{\textasteriskcentered}Improves upon zero-shot with statistical significance~($p\leq 0.05$) under the two-tailed Student's $t$-test.  \textsup{\dag}Improves upon QGen~PL.  \textsup{\ddag}Improves upon ItemDA.%
  \end{tablenotes}%
          \end{threeparttable}%
    }%
\end{subtable}%
\end{table}

\begin{table}[htp]
\caption{Reranking performance of RankT5 on top 1000 BM25-retrieved passages on Signal-1M.}
\label{tab:signal}
\centering
    \scalebox{0.8}{%
    \begin{threeparttable}
        \begin{tabular}{p{2.6cm}p{3.8cm}C{2cm}C{2cm}C{2cm}C{2cm}C{2cm}}
        \toprule
Target domain & Method & MAP & MRR@10 & NDCG@5 & NDCG@10 & NDCG@20 \\
\midrule
\multirowcell{5}[-0.5ex][l]{Signal-1M} & BM25 & {{0.1740}} & {{0.5765}} &  {{0.3639}} & {{0.3215}} & {{0.2905}} \\
& Zero-shot & 0.1511 & 0.4804 &  0.3068 & 0.2685 & 0.2410 \\                \cmidrule(lr){2-7}
& QGen~PL & 0.1541 & 0.5043 &  0.3238 & 0.2799 & 0.2497 \\
& ListDA & 0.1456 & 0.4629 &  0.3002 & 0.2602 & 0.2328 \\
& ListDA~+~QGen~PL & 0.1549 & 0.5170 & 0.3261 & 0.2817 & 0.2505 \\
        \bottomrule
        \end{tabular}
         \begin{tablenotes}[para]
         Source domain is MS~MARCO.  Gain function in NDCG is the identity map.
  \end{tablenotes}%
          \end{threeparttable}%
    }%
\end{table}

We perform additional experiments for unsupervised domain adaptation on the passage reranking task considered in \cref{sec:exp} in \cref{sec:additional.results}, provide case studies on ListDA vs.~zero-shot and QGen~PL~(\cref{fig:cases.zeroshot,fig:cases.pseudolabel}), hyperparameter settings in \cref{sec:hyperparam}, and details on the construction of training lists in \cref{sec:exp.list}.

\subsection{Additional Results}\label{sec:additional.results}

\paragraph{Pairwise Logistic Ranking Loss.}

Besides the listwise softmax cross-entropy ranking loss used in \cref{sec:exp}:
\begin{equation}
  \ell(s, y) = -\sum_{i=1}^{\ell} y_{i} \log\rbr*{\frac{\exp(s_i)}{\sum_{j=1}^\ell \exp(s_j)}},
\end{equation}
we experiment with the pairwise logistic ranking loss~\citep{burges2005LearningRankUsing}:
\begin{align}
  \ell(s, y) = -\sum_{i=1}^{\ell}\sum_{j=1}^{\ell} \indc[y_{i} > y_{j}] \log\rbr*{\frac{\exp(s_i)}{\exp(s_i)+\exp(s_j)}}.
\end{align}

The results with using the pairwise logistic loss for training on the Robust04 dataset are provided in \cref{tab:pairlogit}, which are not better than using the softmax cross-entropy loss~(cf.~\cref{tab:main.results}; see also~\citep{jagerman2022RaxComposableLearningtoRank}), hence further experiments with this loss are not pursued.

As an implementation remark, in this set of experiments, we do not perform pairwise comparisons to obtain the predicted rank assignments during inference (where lists have length-1000) due to the high time complexity (the loss is aggregated pairwise during training, since we truncate training lists to length-31; see \cref{sec:exp.list}).  Whether or not the forward pass of the model involves pairwise computations does not matter to ListDA, which is applicable to any (pointwise, pairwise, or listwise) model as long as we can gather list-level representations; ListDA could be instantiated on pairwise models e.g.~DuoT5~\citep{pradeep2021ExpandoMonoDuoDesignPattern}, although not pursued in this work.

\paragraph{ItemDA with Transformer Discriminator.}

To show that the improvements brought by ListDA over ItemDA is due to learning list-level invariant representations and not because of the use of the more expressive transformer as the  discriminator vs.\ MLP, we experiment with a variant of ItemDA that uses the same transformer discriminator used in our ListDA runs. Although, we remark that whereas each input to the ListDA discriminator is a list of feature vectors as a length-$\ell$ sequence, each input to the ItemDA transformer discriminator is a single feature vector, or, a sequence of length-1. So, the attention mechanism of the transformer becomes pointless, and the ItemDA transformer discriminator collapses to an MLP (with skip connections and LayerNorm).

The results are presented in \cref{tab:itemda.tr} (cf.~\cref{tab:main.results}). Across all datasets and metrics, no consistent improvement of ItemDA~(transformer) over ItemDA~(MLP) is observed.

\paragraph{ListDA~+~QGen~PL Method, and Signal-1M Dataset.}  We experiment with supplementing ListDA with QGen pseudolabels by~(uniformly) combining the training objectives of ListDA and QGen~PL~(\textbf{ListDA~+~QGen~PL}).  The results are presented in \cref{tab:listda.qgenpl}.  We also apply this method on the Signal-1M~(RT) dataset~\citep{suarez2018DataCollectionEvaluating}, with results in \cref{tab:signal}.
It is observed that reranking using neural rankers transferred from MS~MARCO source domain does not perform better than BM25 (i.e., without any reranking) on Signal-1M, which is also noted in prior work~\citep{thakur2021BEIRHeterogeneousBenchmark,liang2022NoParametersLeft}.  This does not mean that neural rerankers are worse than BM25, but could be that MS~MARCO is not a good choice as the source domain for the target of Signal-1M, because of the arguably large domain shift between tweet retrieval and MS~MARCO web search---qualitatively, it can be seen from \cref{fig:examples} that the text styles and task semantics of the two domains are very different.  Hence, the following discussions on Signal-1M results focus on comparing reranking results.

On Signal-1M, QGen~PL improves upon the zero-shot baseline, but ListDA does not, which is likely due to the large domain shift between MS~MARCO and Signal-1M that prevented ListDA from finding the correct source-target feature alignment without supervision.
By incorporating QGen pseudolabels, ListDA performance improves with + QGen~PL, which could have benefited from QGen q-d pairs acting as anchor points for ListDA to find the correct correspondence between source and target.

Overall, ListDA~+~QGen~PL is the only method that consistently improves upon the zero-shot baseline on all four datasets, including Signal-1M, although it underperforms ListDA without QGen~PL on the other three.  Further improvements to this hybrid method may be possible with better strategies for balancing the training objectives of ListDA and QGen~PL.

\paragraph{Case Studies.}  In \cref{fig:cases.zeroshot,fig:cases.pseudolabel}, we include examples where the ListDA top-1 reranked results are better than those of zero-shot and QGen~PL, respectively, for qualitatively understanding the benefits and advantages of ListDA.

In zero-shot, the reranker is trained on the general domain MS~MARCO dataset and applied on the target domains without adaptation.  When the target domain differs from MS~MARCO stylistically or consists of passages containing domain-specific words, as in the TREC-COVID and BioASQ datasets for biomedical retrieval, the zero-shot model, which has not seen the specialized language usages, may resort to keyword matching.  Examples of such cases are presented in \cref{fig:cases.zeroshot}, where the top-1 passages reranked by the zero-shot model contain keywords from the query but are irrelevant.

In QGen~PL, the reranker is trained by assuming the QGen synthesized query-document pairs to be relevant.  However, as remarked in \cref{sec:exp}, because QGen is trained on MS~MARCO and applied on the target domains in a zero-shot manner, the pseudolabels are not guaranteed to be valid, meaning that there could be false positives.  This is observed in examples presented in \cref{fig:examples,fig:cases.pseudolabel}.  One specific scenario where false positives hurt transfer performance is when the synthesized queries (of false positive documents) coincide with queries from the evaluation set, observed from the examples presented in \cref{fig:cases.pseudolabel} on TREC-COVID and BioASQ datasets.  On the other hand, ListDA does not assume the pseudolabels in its training objective, so it does not suffer from these potential issues with QGen~PL.

\subsection{Hyperparameter Settings}\label{sec:hyperparam}

For BM25, we use the implementation of Anserini~\citep{yang2017AnseriniEnablingUse}, set $k_1=0.82$ and $b=0.68$ on MS~MARCO source domain, and $k_1=0.9$ and $b=0.4$ on all target domains without tuning.  As in~\citep{thakur2021BEIRHeterogeneousBenchmark}, titles are indexed as a separate field with equal weight as the document body, when available.

For the RankT5 reranking model, it is fine-tuned from the T5~v1.1~base checkpoint on a Dragonfish TPU with 8x8 topology for 100,000 steps with a batch size of 32 per domain~(each training list contains 31 items).  We tune the learning rate $\eta_\text{rank}\in \textnormal{\{5e-5, 1e-4, 2e-4\}}$, and select the one with the best zero-shot performance to use on all adaptation methods, on each dataset~(see \cref{fig:sweep.params.zeroshot} for zero-shot sweep results).  We apply a learning rate schedule on $\eta_\text{rank}$ that decays~(exponentially) by a factor of 0.7 every 5,000 steps.  The concatenated query-document text inputs are truncated to 512 tokens.

For the domain discriminators, there are two hyperparameters: the strength of invariant feature learning $\lambda$, and the discriminator learning rate $\eta_\text{ad}$.  We tune both by sweeping $\lambda\in\textnormal{\{0.01, 0.02\}}$, and $\eta_\text{ad}\in \textnormal{\{10, 20, 40\}}\times \eta_\text{rank}$ as multiples of the reranker learning rate. The tuned hyperparameter settings of $\eta_\text{rank}$, $\eta_\text{ad}$ and $\lambda$ used in our experiments are included in \cref{tab:hyperparams}. 

The sweep results for ListDA are plotted in \cref{fig:sweep.params}. It is observed that a balanced choice of $\lambda$ is needed to elicit the best performance from ListDA, and the same choice works fairly well across datasets.
But for $\eta_\text{ad}$, each dataset prefers different settings, probably due to their distinct domain characteristics.

In terms of running time, the adaptation methods of ListDA, ItemDA, and QGen~PL all have double the training time compared to zero-shot learning due to data loading. This is because in addition to source domain data, the adaptation methods require target domain (unlabeled) data.  Among ListDA, ItemDA, and QGen~PL, the training times are roughly the same, because the overhead of the domain discriminators is not significant.

\begin{figure}[t]
\centering
\begin{subfigure}[t]{0.49\linewidth}
\centering
\includegraphics[width=0.61\linewidth]{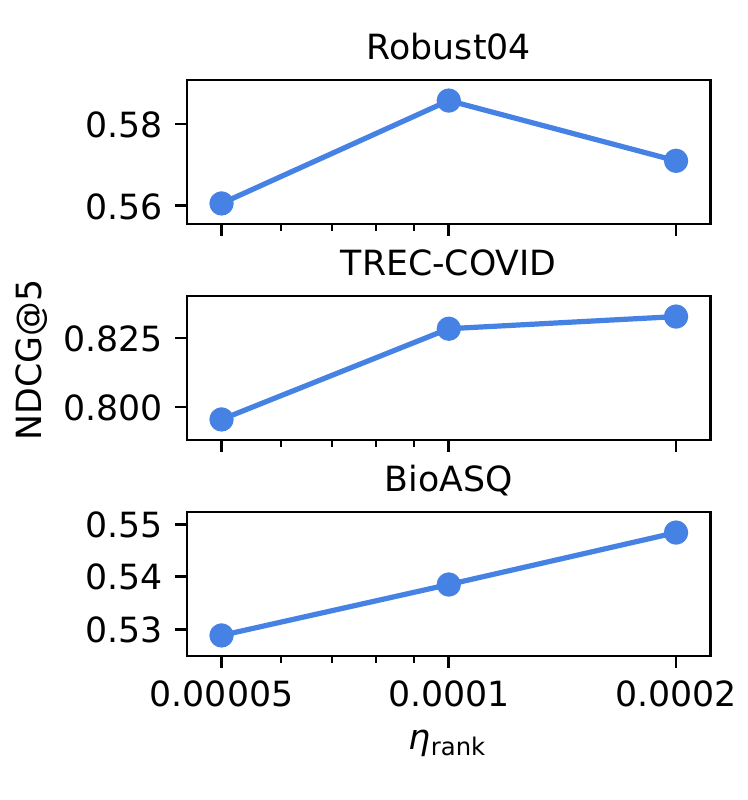}
\caption{With softmax cross-entropy ranking loss.}
\end{subfigure}
%\hspace{5em}
\begin{subfigure}[t]{0.49\linewidth}
\centering
\includegraphics[width=0.61\linewidth]{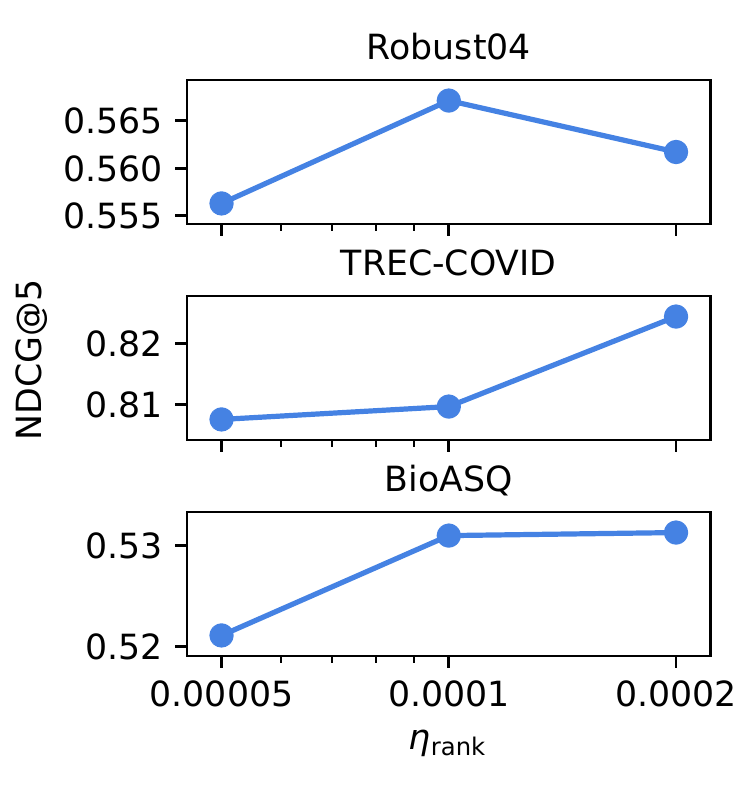}
\caption{With pairwise logistic ranking loss.}
\end{subfigure}%
\caption{Zero-shot performance under different hyperparameter settings for $\eta_\text{rank}$.} 
\label{fig:sweep.params.zeroshot}
\end{figure}

\begin{figure}[t]
\centering
\begin{subfigure}[t]{0.49\linewidth}
\centering
\includegraphics[width=0.61\linewidth]{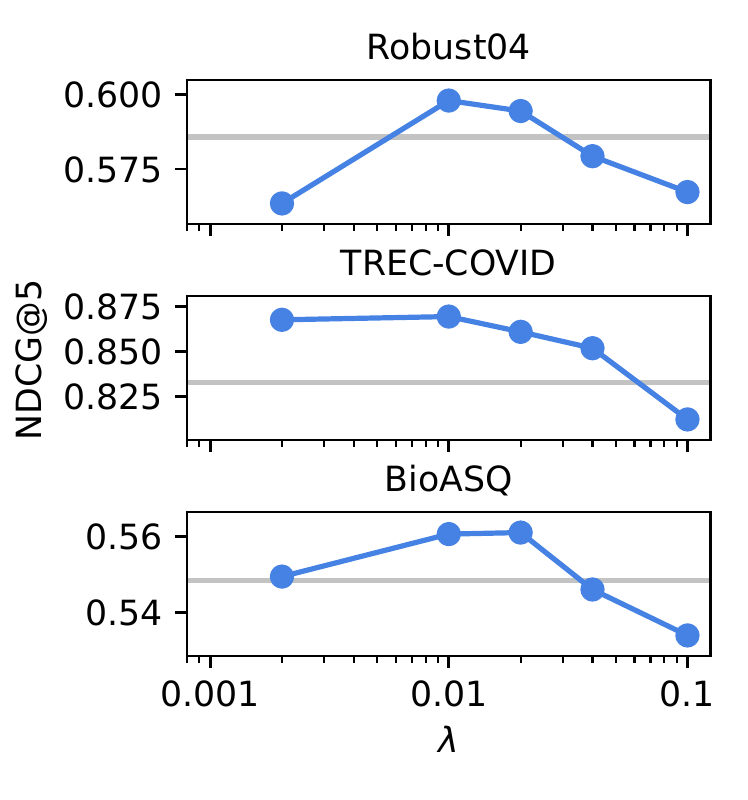}
\caption{Fix $\eta_\text{ad}=0.004$ and sweep $\lambda$.}
\end{subfigure}
%\hspace{5em}
\begin{subfigure}[t]{0.49\linewidth}
\centering
\includegraphics[width=0.61\linewidth]{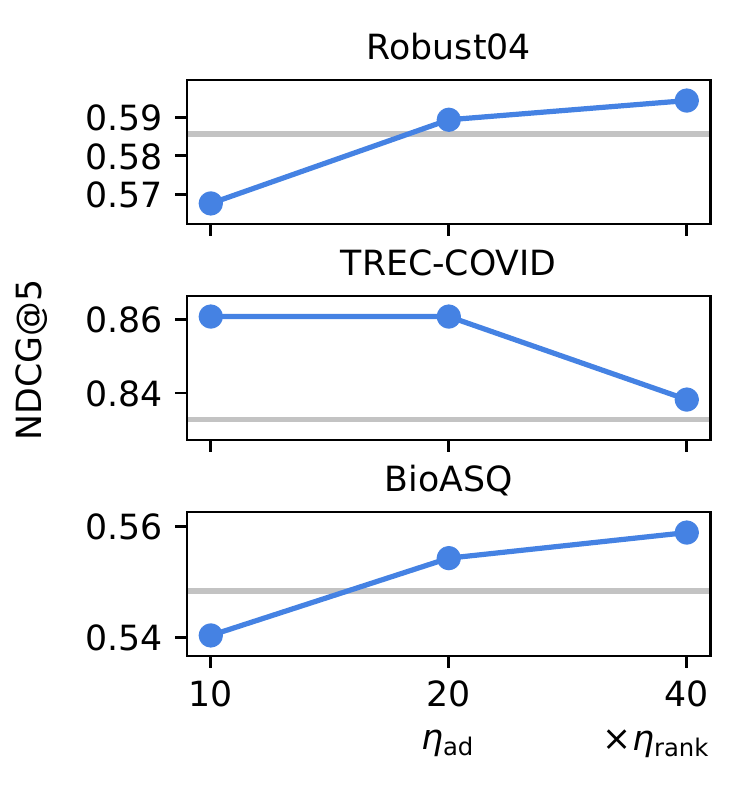}
\caption{Fix $\lambda=0.02$ and sweep $\eta_\text{ad}$ (as multiples of $\eta_\text{rank}$).}
\end{subfigure}%
\caption{ListDA performance under different hyperparameter settings for $\lambda$ and $\eta_\text{ad}$.   Grey  horizontal line is zero-shot performance with the best $\eta_\text{rank}$ setting.} 
\label{fig:sweep.params}
\end{figure}

%\paragraph{Sensitivity to Hyperparameters.}  ListDA introduces two main hyperparameters for the discriminator $\fad$: the learning rate $\eta_\text{ad}$ and the strength of invariant feature learning $\lambda$.  We plot in \cref{fig:sweep.params} the sensitivity of their settings by fixing one and varying the other.  

\begin{table}[t]
\caption{Hyperparameter settings of RankT5 model and domain discriminators for passage reranking experiments.}
\label{tab:hyperparams}
\centering
\begin{subtable}[t]{\linewidth}%
\centering%
\caption{With softmax cross-entropy ranking loss.}%
    \scalebox{0.8}{%
        \begin{tabular}{p{2.6cm}p{3.8cm}C{1cm}C{1cm}C{1cm}}%{@{}c@{\hskip7pt}c@{\hskip10pt}c@{\hskip7pt}c@{\hskip7pt}c@{\hskip7pt}c@{}}
        \toprule
        Target domain & Method  & $\eta_\text{rank}$ & $\eta_\text{ad}$ & $\lambda$ \\
        \midrule
        \multirow{6}{*}{Robust04}
                & Zero-shot & \multirow{6}{*}{1e-4} & - & -  \\
                & QGen~PL &  & - & -  \\
                & ItemDA (MLP) &  & 1e-3 & 0.01  \\ 
                & ItemDA (transformer) &  & 2e-3 & 0.01  \\ 
                & ListDA &  & 4e-3 & 0.01  \\ 
                & ListDA~+~QGen~PL &  & 1e-3 & 0.02 \\
        \midrule 
        \multirow{6}{*}{TREC-COVID}
                & Zero-shot & \multirow{6}{*}{2e-4} & - & -  \\
                & QGen~PL &  & - & -  \\
                & ItemDA (MLP) &  & 8e-3 & 0.01  \\ 
                & ItemDA (transformer) &  & 2e-3 & 0.01  \\ 
                & ListDA &  & 4e-3 & 0.01  \\ 
                & ListDA~+~QGen~PL &  & 4e-3 & 0.02 \\
        \midrule
        \multirow{6}{*}{BioASQ}
                & Zero-shot & \multirow{6}{*}{2e-4} & - & -  \\
                & QGen~PL &  & - & -  \\
                & ItemDA (MLP) &  & 4e-3 & 0.01  \\ 
                & ItemDA (transformer) &  & 2e-3 & 0.01  \\ 
                & ListDA &  & 8e-3 & 0.01  \\ 
                & ListDA~+~QGen~PL &  & 4e-3 & 0.02 \\
        \midrule
        \multirow{4}{*}{Signal-1M}
                & Zero-shot & \multirow{4}{*}{5e-5} & - & -  \\
                & QGen~PL &  & - & -  \\
                & ListDA &  & 2e-3 & 0.01  \\ 
                & ListDA~+~QGen~PL &  & 1e-3 & 0.02 \\
        \bottomrule
        \end{tabular}
        }%
\end{subtable}%

\vspace*{1em}%

\begin{subtable}[t]{\linewidth}
\centering
\caption{With pairwise logistic ranking loss.}
    \scalebox{0.8}{%
        \begin{tabular}{p{2.6cm}p{3.8cm}C{1cm}C{1cm}C{1cm}}
        \toprule
        Target domain & Method  & $\eta_\text{rank}$ & $\eta_\text{ad}$ & $\lambda$ \\
        \midrule
        \multirow{4}{*}{Robust04}
                & Zero-shot & \multirow{4}{*}{1e-4} & - & -  \\
                & QGen~PL &  & - & - \\ 
                & ItemDA (MLP) &  & 2e-3 & 0.01  \\ 
                & ListDA &  & 2e-3 & 0.01  \\ 
        \bottomrule
        \end{tabular}
        }%
\end{subtable}%
\end{table}

\subsection{Training List Construction}\label{sec:exp.list}

Recall from the setup of ListDA on ranking problems that, the inputs are defined over lists and the invariant representations are learned at the list level.  In other words, the ranking loss and the adversarial loss are to be computed on lists of ranking scores and feature vectors  for each query (containing both relevant and irrelevant ones).

In our reranking setup, each list would be the top-$r$ passages retrieved by BM25 on a query, where we set $r=1000$ in our evaluations.  However, there are two complications.  The first is that due to memory constraints, it is not feasible to feed all 1000 q-d pairs from each list through the RankT5 model in one pass during training (or time-consuming if via gradient accumulation).  The second is that the source domain MS~MARCO dataset only labels one relevant passage for each query, meaning that out of the 1000 passages retrieved by BM25 for each query, we may only know that (at most) one of them is relevant; the labels for the remaining 999 passages are unknown.

\paragraph{Negative Sampling.} 

To address both issues, we truncate the list to length-31 for training and perform random sampling to gather irrelevant q-d pairs from the top 1000 BM25 retrieved passages.

On the MS~MARCO source domain, given a query $q$ and the top 1000 passages retrieved by BM25, $d_1,\cdots,d_{1000}$, we yield a training list by bundling the one passage $d^*$ labeled as relevant with 30 randomly sampled passages $d_{N_1},\cdots,d_{N_{30}}$ to be treated as irrelevant, i.e., $x=([q,d^*],[q,d_{N_1}],\cdots,[q,d_{N_{30}}])$ and $y=(1,0,\cdots,0)$.

On the target domains, we follow the same procedure but with QGen synthesized queries. Given a QGen query $q$ and the top 1000 retrieved passages, the training list consists of the pseudolabeled passage $\hat d$ (i.e., the passage on which the query $q$ was synthesized) and 30 randomly sampled irrelevant passages, i.e., $x=([q,\hat d],[q,d_{N_1}],\cdots,[q,d_{N_{30}}])$ and $y=(1,0,\cdots,0)$.  Note that the pseudolabels $y$ on the target domain are used by QGen~PL for training but discarded by ListDA.

\paragraph{Reducing False Negatives on MS~MARCO.}

One potential problem arising from sampling is that the constructed lists may contain false negatives (i.e., relevant documents that are incorrectly treated as irrelevant). In fact, false negatives are prevalent in the MS~MARCO dataset (e.g., due to duplicate passages).  While these false negatives are mostly harmless to source domain supervised training because the effects are canceled out when there are positively labeled q-d pairs to which the false negative q-d pairs are similar, false negatives can affect invariant representation learning.

One way that it may negatively affect invariant representation learning is that the duplicates in the lists (which have identical feature vectors) will cause ListDA to collapse distinct passages on the target domain to the same feature vector for achieving alignment, which is an artifact that will cause information loss in the target domain feature representations.

To lower the chance of sampling false negatives on MS~MARCO, we rerank BM25-retrieved passages using a ranker pre-trained on MS~MARCO, and sample negatives from passages ranked at 300 or lower, as the duplicates and (unlabeled) relevant passages are concentrated at the top~\citep{qu2021RocketQAOptimizedTraining}.  We apply this sampling method only when constructing source domain lists for feature alignment~(namely, in ListDA and ItemDA), and it only affects the computation of adversarial loss.  This sampling method is not applicable to the target domains because we do not have reliable pre-trained rankers for them.  It is also not used for constructing the lists for source domain supervised training~(i.e., the computation of ranking loss), because it excludes ``hard negatives'' from the training lists and results in a weaker reranker, as observed in our preliminary experiments.

\section{Experiments on Yahoo!~LETOR}\label{sec:yahoo}

We evaluate ListDA on the Yahoo!~Learning to Rank Challenge v2.0~\citep{chapelle2011YahooLearningRank}, and compare it to zero-shot and ItemDA (QGen~PL is not applicable to tabular data).  It is a web search ranking dataset with two subsets, ``Set~1'' and ``Set~2'', whose data originate from the US and a country in Asia, respectively; the training set sizes are 19,944 and 1,266, and the test set sizes are 6,983 and 3,798 (of which there are 47 and 17 lists of length-1). The dataset is tabular, and each item is represented by a 700-d vector with values in the range of $[0,1]$. 
  Among the 700 features, 415 are defined on both sets~(shared), and the other 285 are defined on either Set~1 or 2~(disjoint); we hence write each item $x\coloneqq [x_\textnormal{shared},x_\textnormal{disjoint}]$ as a concatenation of the shared features and the disjoint ones,  $x_\textnormal{shared}\in\RR^{415}, x_\textnormal{disjoint}\in\RR^{285}$.

We consider unsupervised domain adaptation from Set~1 to Set~2 (i.e., we discard the labels in Set~2).  Our implementation uses PyTorch and the Hugging Face Transformers library~\citep{wolf2020TransformersStateoftheArtNatural}.

\paragraph{Models.}  Our models have the same setup as that of the passage reranking experiments in \cref{sec:exp}, except that the RankT5 text ranking model is replaced by a three-hidden-layer MLP following~\citep{zhuang2020FeatureTransformationNeural}, 
and we treat the list of 256-d outputs at the last hidden layer as feature representations:
\begin{align}
g(x)_i &\eqqcolon v_i = \textnormal{ReLU}\rbr*{W_3 \begin{bmatrix}
  \textnormal{ReLU}(W_2\textnormal{ReLU}(W_1x_{i,\textnormal{shared}} + b_1)+ b_2)\\
  \textnormal{ReLU}(W'_2\textnormal{ReLU}(W'_1x_{i,\textnormal{disjoint}} + b'_1) + b'_2)
\end{bmatrix}  + b_3 },\\
  h(x)_i &\eqqcolon s_i = W_4 v_i  + b_4,
\end{align}
where $W_1\in\RR^{1024\times 415}$, $W'_1\in\RR^{1024\times 285}$, $W_2,W'_2\in\RR^{256\times1024}$, $W_3\in\RR^{256\times512}$, and $W_4\in\RR^{1\times 256}$, and are randomly initialized. Note that we use separate input layers ($W_1,W'_1$) and first hidden layers ($W_2,W'_2$) for the shared and disjoint features, then concatenate the hidden representations.  This tweak improves performance of the zero-shot baseline.

For ItemDA, we use the same ensemble of five three-layer MLPs as in \cref{sec:exp}.  For the ListDA  discriminator, an ensemble of five transformers is used, where each transformer is a stack of three T5 encoder blocks, with 4 attention heads~(\texttt{num\_heads}), size-32 key, query and value projections per attention head~(\texttt{d\_kv}), and size-1024 intermediate feedforward layers~(\texttt{d\_ff}).

\begin{table}[p]
\renewrobustcmd{\textbf}{\fontseries{b}\selectfont}
\caption{Ranking performance of MLP ranker on Yahoo!~LETOR~(Set~2).}
\label{tab:yahoo}
\centering
    \scalebox{0.8}{%
    \begin{threeparttable}
        \begin{tabular}{p{2.8cm}p{2cm}C{2cm}C{2cm}C{2cm}C{2cm}C{2cm}}
        \toprule
        Target domain & Method & MAP & MRR@10 & NDCG@5 & NDCG@10 & NDCG@20 \\
        \midrule
        \multirowcell{4}[-0.5ex][l]{Yahoo!~LETOR\\(Set~2)}
        & Zero-shot     & 0.5464 & 0.6796 & 0.7466 & 0.7778 & 0.8308 \\
        & Supervised    & 0.5714 & 0.7045 & 0.7760 & 0.8012 & 0.8479 \\
        \cmidrule(lr){2-7}
        & ItemDA        & 0.5604\textsup*{\textasteriskcentered} & 0.6984\textsup*{\textasteriskcentered} & 0.7563\textsup*{\textasteriskcentered} & 0.7822 & 0.8350\textsup*{\textasteriskcentered} \\
        & ListDA        & {\textbf{0.5633}}\textsup*{\textasteriskcentered} & {\textbf{0.7026}}\textsup*{\textasteriskcentered} & {\textbf{0.7599}}\textsup*{\textasteriskcentered} & {\textbf{0.7845}}\textsup*{\textasteriskcentered} & {\textbf{0.8355}}\textsup*{\textasteriskcentered} \\
        \bottomrule
        \end{tabular}
         \begin{tablenotes}[para]
        Results are from an ensemble of five scoring models.  Bold indicates the best unsupervised result. Source domain is Yahoo!~LETOR~(Set~1).  Gain function in NDCG is the identity map.   \textsup{\textasteriskcentered}Improves upon zero-shot with statistical significance~($p\leq 0.05$) under the two-tailed Student's $t$-test.  \textsup{\ddag}Improves upon ItemDA.  Significance tests are not performed on supervised results.
  \end{tablenotes}%
          \end{threeparttable}%
    }%
\end{table}

\paragraph{Results.}  The results are presented in \cref{tab:yahoo}, which are ensembles of 10 separately trained models to reduce the variance from the randomness in the initialization and training process, due the small dataset size.
Yahoo!~LETOR is annotated with 5-level relevancy, and the scores are binarized for MAP and MRR metrics by mapping 0~(bad) and 1~(fair) to negative, and 2~(good), 3~(excellent), 4~(perfect) to positive.  Thanks to the availability of labeled data on Set~2, we also include \textbf{supervised} results from training on Set~1~and~2 to serve as an upper bound for unsupervised domain adaptation.

ListDA again outperforms ItemDA for unsupervised domain adaptation, collaborating the discussions in the main sections of the paper. Although, here, their performance gap is smaller compared to passage reranking results, which we suspect is because the contextual~(i.e., query) information for defining the list structure of the data is too weak or discarded when the numerical features are generated (e.g., the numerical features could represent the difference between query and document, hence may not retain information about the original query).

\paragraph{Hyperparameters.}

\begin{table}[p]
\caption{Hyperparameter settings of MLP ranker and domain discriminators for Yahoo!~LETOR experiments.}
\label{tab:yahoo.hyperparams}
\centering
    \scalebox{0.8}{%
        \begin{tabular}{p{2.8cm}p{2cm}C{1cm}C{1cm}C{1cm}}
        \toprule
        Target domain & Method  & $\eta_\text{rank}$ & $\eta_\text{ad}$ & $\lambda$ \\
        \midrule
        \multirowcell{4}[0pt][l]{Yahoo!~LETOR\\(Set~2)}
                & Zero-shot & 8e-4 & - & -  \\
                & Supervised & 4e-5 & - & -  \\
                & ItemDA & 8e-4 & 1.6e-3 & 0.4  \\ 
                & ListDA & 8e-4 & 1.6e-3 & 0.8  \\ 
        \bottomrule
        \end{tabular}
        }%
\end{table}

The model is trained from scratch on an NVIDIA A6000 GPU for 10,000 steps with a batch size of 32 per domain~(length of training lists varies from one to 140 items).   We apply a learning rate schedule on $\eta_\text{rank}$ that decays~(exponentially) by a factor of 0.7 every 500 steps.

We tune the hyperparameters by performing grid search over: $\eta_\text{rank}\in\textnormal{\{1e-5, 2e-5, 4e-5, 8e-5},\allowbreak \textnormal{1e-4, 2e-4, 4e-4, 8e-4, 1e-3\}}$, $\eta_\text{ad}\in \textnormal{\{0.2, 0.4, 0.8, 1, 2, 4, 8, 10\}}\times \eta_\text{rank}$  as multiples of the ranker learning rate, and $\lambda\in\textnormal{\{0.08, 0.1, 0.2, 0.4, 0.8, 1\}}$.  The tuned settings are included in \cref{tab:yahoo.hyperparams}.

\begin{table}[p]
\caption{Examples of relevant q-d pairs from the test set and pairs synthesized by QGen on domains considered in \cref{sec:exp} experiments.  Text truncation or omission are indicated by ``[...]''.}
\label{fig:examples}
\linespread{0.9}\selectfont\centering
\scalebox{0.8}{%
        \begin{tabular}{lp{0.49\linewidth}p{0.49\linewidth}}
        \toprule
        {Dataset} & {Relevant q-d pairs}  & {QGen q-d pairs} \\
        \midrule
        MS MARCO
                & %
\begin{enumerate}[nolistsep,leftmargin=\widthof{D:}+0.5em,parsep=0pt]\vspace{-7pt}
    \item[D:] What is cartography? A.~the science of mapmaking B.~the science of shipbuilding C.~the science of charting direction on a ship D.~the science of measuring distances on the ocean. Cartography is the science of map making A.
\end{enumerate}
\begin{enumerate}[nolistsep,leftmargin=\widthof{Q:}+0.5em,parsep=0pt]\vspace{2pt}
    \item[Q:] what is the science of mapmaking called 
\end{enumerate}\phantom{ asd}
\begin{enumerate}[nolistsep,leftmargin=\widthof{D:}+0.5em,parsep=0pt]\vspace{-3pt}
    \item[D:] The flu shot also contains the following ingredients: sodium phosphate \& buffered isotonic sodium chloride solution, formaldehyde, octylphenol ethoxylate, and gelatin, according to the FDA.
\end{enumerate}
\begin{enumerate}[nolistsep,leftmargin=\widthof{Q:}+0.5em,parsep=0pt]\vspace{2pt}
    \item[Q:] what's in the flu shot
\end{enumerate}\vspace{-10pt}\phantom{ }
                & 
-
\\
        \midrule
        TREC-COVID
                & %
\begin{enumerate}[nolistsep,leftmargin=\widthof{D:}+0.5em,parsep=0pt]\vspace{-7pt}
    \item[D:] An Evidence Based Perspective on mRNA-SARS-CoV-2 Vaccine Development.  [...] The production of mRNA-based vaccines is a promising recent development in the production of vaccines. However, there remain significant challenges in the development   [...]
\end{enumerate}
\begin{enumerate}[nolistsep,leftmargin=\widthof{Q:}+0.5em,parsep=0pt]\vspace{2pt}
    \item[Q:] what is known about an mRNA vaccine for the SARS-CoV-2 virus?
\end{enumerate}\phantom{ asd}
\begin{enumerate}[nolistsep,leftmargin=\widthof{D:}+0.5em,parsep=0pt]\vspace{-3pt}
    \item[D:] The possible pathophysiology mechanism of cytokine storm in elderly adults with COVID-19 infection: the contribution of ``inflame-aging''. PURPOSE: Novel Coronavirus disease 2019~(COVID-19), is an acute respiratory distress syndrome~(ARDS),   [...]
\end{enumerate}
\begin{enumerate}[nolistsep,leftmargin=\widthof{Q:}+0.5em,parsep=0pt]\vspace{2pt}
    \item[Q:] What is the mechanism of cytokine storm syndrome on the COVID-19?
\end{enumerate}\vspace{-10pt}\phantom{ }
                & 
\begin{enumerate}[nolistsep,leftmargin=\widthof{D:}+0.5em,parsep=0pt]\vspace{-7pt}
    \item[D:] Impact of arterial load on the agreement between pulse pressure analysis and esophageal Doppler. INTRODUCTION. The reliability of pulse pressure analysis to estimate cardiac output is known to be affected by arterial load changes.   [...]
\end{enumerate}
\begin{enumerate}[nolistsep,leftmargin=\widthof{QGen:}+0.5em,parsep=0pt]\vspace{2pt}
    \item[QGen:] what is arterial load for pulse pressure analysis
\end{enumerate}\phantom{zxc}
\begin{enumerate}[nolistsep,leftmargin=\widthof{D:}+0.5em,parsep=0pt]\vspace{-3pt}
    \item[D:] Opportunity Costs Pacifism. If the resources used to wage wars could be spent elsewhere and save more lives, does this mean that wars are unjustified? This article considers this question, which has been largely overlooked by Just War Theorists and pacifists. It focuses on whether the opportunity costs of war   [...]
\end{enumerate}
\begin{enumerate}[nolistsep,leftmargin=\widthof{QGen:}+0.5em,parsep=0pt]\vspace{2pt}
    \item[QGen:] opportunity cost pacifism
\end{enumerate}\vspace{-10pt}\phantom{ }
\\
        \midrule
        BioASQ
                & %
\begin{enumerate}[nolistsep,leftmargin={\widthof{Q:}+0.5em},parsep=0pt]\vspace{-7pt}
    \item[D:] The role of extended-release amantadine for the treatment of dyskinesia in Parkinson's disease patients.  [...] Extended-release amantadine~(amantadine ER) is the first approved medication for the treatment of dyskinesia. When it is given at bedtime, it   [...]
\end{enumerate}
\begin{enumerate}[nolistsep,leftmargin={\widthof{Q:}+0.5em},parsep=0pt]\vspace{2pt}
    \item[Q:] Is amantadine ER the first approved treatment for akinesia?
\end{enumerate}\phantom{ asd}
\begin{enumerate}[nolistsep,leftmargin=\widthof{D:}+0.5em,parsep=0pt]\vspace{-3pt}
    \item[D:]  [...] We investigated the health-related quality of life~(HRQoL) of long-term prostate cancer patients who received leuprorelin acetate in microcapsules~(LAM) for androgen-deprivation therapy~(ADT).  [...]
\end{enumerate}
\begin{enumerate}[nolistsep,leftmargin={\widthof{Q:}+0.5em},parsep=0pt]\vspace{2pt}
    \item[Q:] Can leuprorelin acetate be used as androgen deprivation therapy?
\end{enumerate}\vspace{-10pt}\phantom{ }
                & 
\begin{enumerate}[nolistsep,leftmargin=\widthof{D:}+0.5em,parsep=0pt]\vspace{-7pt}
    \item[D:] Subluxation of the femoral head in coxa plana. Twenty-two patients who had severe coxa plana had closed reduction for lateral subluxation of the femoral head,   [...] The average age when the patients were first seen was eight years and six months.   [...]
\end{enumerate}
\begin{enumerate}[nolistsep,leftmargin={\widthof{QGen:}+0.5em},parsep=0pt]\vspace{2pt}
    \item[QGen:] average age of femoral subluxation
\end{enumerate}\phantom{zxc}
\begin{enumerate}[nolistsep,leftmargin=\widthof{D:}+0.5em,parsep=0pt]\vspace{-3pt}
    \item[D:]   [...]  a comparison of proxy assessment and patient self-rating using the disease-specific Huntington's disease health-related quality of life questionnaire~(HDQoL). [...] Specific Scales of the HDQoL. On the Specific Hopes and Worries Scale, proxies on average rated HrQoL  as better than patients' [...]
\end{enumerate}
\begin{enumerate}[nolistsep,leftmargin={\widthof{QGen:}+0.5em},parsep=0pt]\vspace{2pt}
    \item[QGen:] which scale is used for proxy assessment of hrqol
\end{enumerate}\vspace{-10pt}\phantom{ }
\\
        \midrule
        Signal-1M
                & %
\begin{enumerate}[nolistsep,leftmargin=\widthof{D:}+0.5em,parsep=0pt]\vspace{-7pt}
    \item[D:] BJP terms party MP R.K Singh's allegation that money has changed hands for tickets in \#BiharPolls as baseless.
\end{enumerate}
\begin{enumerate}[nolistsep,leftmargin={\widthof{Q:}+0.5em},parsep=0pt]\vspace{2pt}
    \item[Q:] Party MP calls BJP `Baura Jayewala Party'
\end{enumerate}\phantom{ asd}
\begin{enumerate}[nolistsep,leftmargin=\widthof{D:}+0.5em,parsep=0pt]\vspace{-3pt}
    \item[D:] Kerry: US plans military talks with Russia over Syria
\end{enumerate}
\begin{enumerate}[nolistsep,leftmargin={\widthof{Q:}+0.5em},parsep=0pt]\vspace{2pt}
    \item[Q:] Kerry: US plans military talks with Russia over Syria
\end{enumerate}\vspace{-10pt}\phantom{ }
                & 
\begin{enumerate}[nolistsep,leftmargin=\widthof{D:}+0.5em,parsep=0pt]\vspace{-7pt}
    \item[D:] Black lives matter: thoughts from the delivery ward in St. Louis:  \#mustread
\end{enumerate}
\begin{enumerate}[nolistsep,leftmargin={\widthof{QGen:}+0.5em},parsep=0pt]\vspace{2pt}
    \item[QGen:] where is black lives matter? 
\end{enumerate}\phantom{zxc}
\begin{enumerate}[nolistsep,leftmargin=\widthof{D:}+0.5em,parsep=0pt]\vspace{-3pt}
    \item[D:] RETWEET if ``Brenda's Got A Baby'' is one of your favorite @2Pac songs. \#RIP2Pac
\end{enumerate}
\begin{enumerate}[nolistsep,leftmargin={\widthof{QGen:}+0.5em},parsep=0pt]\vspace{2pt}
    \item[QGen:] brenda got a baby pac
\end{enumerate}\vspace{-10pt}\phantom{ }
\\
        \bottomrule
        \end{tabular}
    }%
\end{table}

% \hfill

\begin{table}[p]
\caption{Examples where ListDA top-1 reranked result is better than zero-shot.  Text truncation or omission are indicated by ``[...]''.}
\label{fig:cases.zeroshot}
\linespread{0.9}\selectfont\centering
\scalebox{0.8}{%
        \begin{tabular}{lp{0.49\linewidth}p{0.49\linewidth}}
        \toprule
        {Dataset} & {Zero-shot top-1 passage (that are irrelevant)}  & {ListDA top-1 passage (that are relevant)} \\
        \midrule
Robust04 & %
\multicolumn{2}{l}{Q:~~Find information on prostate cancer detection and treatment.}
\\\cmidrule(lr){2-3}
& \begin{enumerate}[nolistsep,leftmargin=\widthof{D:}+0.5em,parsep=0pt]\vspace{-7pt}
    \item[D:] [...] FIRST PATIENT UNDERGOES GENE INSERTION IN CANCER TREATMENT [...] This first round of gene transfer experiments, in which a gene was inserted into a patient's white blood cells, is not expected to directly benefit an individual patient. Instead, the inserted gene is being used to track the movement in the body of the cancer-fighting white blood cells.  [...] Inserting human genes to repair defects may one day help with a host of inherited disorders, [...]
\end{enumerate}\vspace{-10pt}\phantom{ }
                & 
\begin{enumerate}[nolistsep,leftmargin=\widthof{D:}+0.5em,parsep=0pt]\vspace{-7pt}
    \item[D:] 
    [...] Little knowledge goes a long way - Cancer of the prostate need not be a killer / Health Check. Earlier this year, 13-year-old [...] died from cancer of the bladder and prostate. His death is a grim reminder that no male should consider himself immune from waterworks trouble. The prostate, a gland about the size and shape of a chestnut, lies deep in the pelvis just below the bladder. Because it surrounds the urethra, it has the potential to block the flow of urine completely.  [...]    
\end{enumerate}\vspace{-10pt}\phantom{ }
\\
        \midrule
TREC-COVID & %
\multicolumn{2}{l}{Q:~~What are the longer-term complications of those who recover from COVID-19?}
\\\cmidrule(lr){2-3}
& \begin{enumerate}[nolistsep,leftmargin=\widthof{D:}+0.5em,parsep=0pt]\vspace{-7pt}
    \item[D:] 
    [...] Our previous experience with members of the same corona virus family (SARS and MERS) which have caused two major epidemics in the past albeit of much lower magnitude, has taught us that the harmful effect of such outbreaks are not limited to acute complications alone. Long term cardiopulmonary, glucometabolic and neuropsychiatric complications have been documented following these infections. [...]
\end{enumerate}\vspace{-10pt}\phantom{ }
                & 
\begin{enumerate}[nolistsep,leftmargin=\widthof{D:}+0.5em,parsep=0pt]\vspace{-7pt}
    \item[D:] Up to 20-30\% of patients hospitalized with coronavirus disease (COVID-19) have evidence of myocardial involvement. Acute cardiac injury in patients hospitalized with COVID-19 is associated with higher morbidity and mortality. There are no data on how acute treatment for COVID-19 may affect convalescent phase or long-term cardiac recovery and function. Myocarditis from other viral pathogens can evolve into overt or subclinical myocardial dysfunction, [...]
\end{enumerate}\vspace{-10pt}\phantom{ }
\\
        \midrule
BioASQ & %
\multicolumn{2}{l}{Q:~~What is the interaction between WAPL and PDS5 proteins?}
\\\cmidrule(lr){2-3}
& \begin{enumerate}[nolistsep,leftmargin=\widthof{D:}+0.5em,parsep=0pt]\vspace{-7pt}
    \item[D:] Pds5 and Wpl1 act as anti-establishment factors preventing sister-chromatid cohesion until counteracted in S-phase by the cohesin acetyl-transferase Eso1. [...]  Here, we show that Pds5 is essential for cohesin acetylation by Eso1 and ensures the maintenance of cohesion by promoting a stable cohesin interaction with replicated chromosomes. The latter requires Eso1 only in the presence of Wapl, indicating that cohesin stabilization relies on Eso1 only to neutralize the anti-establishment activity.  [...]
\end{enumerate}\vspace{-10pt}\phantom{ }
                & 
\begin{enumerate}[nolistsep,leftmargin=\widthof{D:}+0.5em,parsep=0pt]\vspace{-7pt}
    \item[D:] [...]  Here, we show that cohesin suppresses compartments but is required for TADs and loops, that CTCF defines their boundaries, and that the cohesin unloading factor WAPL and its PDS5 binding partners control the length of loops. In the absence of WAPL and PDS5 proteins, cohesin forms extended loops, presumably by passing CTCF sites, accumulates in axial chromosomal positions (vermicelli), and condenses chromosomes. [...]
\end{enumerate}\vspace{-10pt}\phantom{ }
\\
        \bottomrule
        \end{tabular}
    }%
\end{table}

\newpage

\begin{table}[p]
\caption{Examples where ListDA top-1 reranked result is better than QGen~PL, along with the QGen synthesized query for the top-1 passage.  Text truncation or omission are indicated by ``[...]''.}
\label{fig:cases.pseudolabel}
\linespread{0.9}\selectfont\centering
\scalebox{0.8}{%
        \begin{tabular}{lp{0.49\linewidth}p{0.49\linewidth}}
        \toprule
        {Dataset} & {QGen~PL top-1 passage (that are irrelevant)}  & {ListDA top-1 passage (that are relevant)} \\
        \midrule
Robust04 & %
\multicolumn{2}{l}{Q:~~Identify outbreaks of Legionnaires' disease.}
\\\cmidrule(lr){2-3}
& \begin{enumerate}[nolistsep,leftmargin=\widthof{D:}+0.5em,parsep=0pt]\vspace{-7pt}
    \item[D:] 
[...] 3.~Care of Patients with Tracheostomy 4.~Suctioning of Respiratory Tract Secretions III.~Modifying Host Risk for Infection A.~Precautions for Prevention of Endogenous Pneumonia 1.~Prevention of Aspiration 2.~Prevention of Gastric Colonization B.~Prevention of Postoperative Pneumonia C.~Other Prophylactic Procedures for Pneumonia 1.~Vaccination of Patients 2.~Systemic Antimicrobial Prophylaxis 3.~Use of Rotating ``Kinetic'' Beds Prevention and Control of Legionnaires' Disease [...]
\end{enumerate}\phantom{zxc}
\begin{enumerate}[nolistsep,leftmargin=\widthof{QGen:}+0.5em,parsep=0pt]\vspace{-9pt}
    \item[QGen:] what kind of precautions are used to prevent pneumonia
\end{enumerate}\vspace{-10pt}\phantom{ }
                & 
\begin{enumerate}[nolistsep,leftmargin=\widthof{D:}+0.5em,parsep=0pt]\vspace{-7pt}
    \item[D:] 
[...] LEGIONNAIRE'S DISEASE STRIKES 16 AT REUNION IN COLORADO; 3 DIE. An outbreak of legionnaire's disease at a 50th high school reunion was blamed Thursday for the deaths of three elderly celebrants and the pneumonia-like illness of 13 others. State health officials contacted 250 other people from 21 states who attended the Lamar High School reunion for the classes of 1937 through 1941 but found no new cases, Dr.~Ellen Mangione, a Colorado Department of Health epidemiologist, said. [...]
\end{enumerate}\vspace{-10pt}\phantom{ }
\\
        \midrule
TREC-COVID & %
\multicolumn{2}{l}{Q:~~what drugs have been active against SARS-CoV or SARS-CoV-2 in animal studies?}
\\\cmidrule(lr){2-3}
& \begin{enumerate}[nolistsep,leftmargin=\widthof{D:}+0.5em,parsep=0pt]\vspace{-7pt}
    \item[D:] 
Different treatments are currently used for clinical management of SARS-CoV-2 infection, but little is known about their efficacy yet. Here we present ongoing results to compare currently available drugs for a variety of diseases to find out if they counteract SARS-CoV-2-induced cytopathic effect in vitro. [...] We will provide results as soon as they become available, [...]
\end{enumerate}\phantom{zxc}
\begin{enumerate}[nolistsep,leftmargin=\widthof{QGen:}+0.5em,parsep=0pt]\vspace{-9pt}
    \item[QGen:] what is the treatment for sars
\end{enumerate}\vspace{-10pt}\phantom{ }
                & 
\begin{enumerate}[nolistsep,leftmargin=\widthof{D:}+0.5em,parsep=0pt]\vspace{-7pt}
    \item[D:] 
[...] the antiviral efficacies of lopinavir-ritonavir, hydroxychloroquine sulfate, and emtricitabine-tenofovir for SARS-CoV-2 infection were assessed in the ferret infection model. [...]  all antiviral drugs tested marginally reduced the overall clinical scores of infected ferrets but did not significantly affect in vivo virus titers. Despite the potential discrepancy of drug efficacies between animals and humans, these preclinical ferret data should be highly informative to future therapeutic treatment of COVID-19 patients.
\end{enumerate}\vspace{-10pt}\phantom{ }
\\
        \midrule
BioASQ & %
\multicolumn{2}{l}{Q:~~What is the function of the Spt6 gene in yeast?}
\\\cmidrule(lr){2-3}
& \begin{enumerate}[nolistsep,leftmargin=\widthof{D:}+0.5em,parsep=0pt]\vspace{-7pt}
    \item[D:] As a means to study surface proteins involved in the yeast to hypha transition, human monoclonal antibody fragments (single-chain variable fragments, scFv) have been generated that bind to antigens expressed on the surface of Candida albicans yeast and/or hyphae. [...]  To assess C.~albicans SPT6 function, expression of the C.~albicans gene was induced in a defined S.~cerevisiaespt6 mutant. Partial complementation was seen, confirming that the C.~albicans and S.~cerevisiae genes are functionally related in these species.
\end{enumerate}\phantom{zxc}
\begin{enumerate}[nolistsep,leftmargin=\widthof{QGen:}+0.5em,parsep=0pt]\vspace{-9pt}
    \item[QGen:] what is the function of spt6 gene in candida albicans
\end{enumerate}\vspace{-10pt}\phantom{ }
                & 
\begin{enumerate}[nolistsep,leftmargin=\widthof{D:}+0.5em,parsep=0pt]\vspace{-7pt}
    \item[D:] Spt6 is a highly conserved histone chaperone that interacts directly with both RNA polymerase II and histones to regulate gene expression. [...] Our results demonstrate dramatic changes to transcription and chromatin structure in the mutant, including elevated antisense transcripts at >70\% of all genes and general loss of the +1 nucleosome. Furthermore, Spt6 is required for marks associated with active transcription, including trimethylation of histone H3 on lysine 4, previously observed in humans but not Saccharomyces cerevisiae, and lysine 36.  [...]
\end{enumerate}\vspace{-10pt}\phantom{ }
\\
        \bottomrule
        \end{tabular}
    }%
\end{table}

\end{document}